\newtheorem{thm}{Theorem}
\newtheorem{definition}[thm]{Definition}
\newtheorem{proposition}[thm]{Proposition}
\newenvironment{proof}{\noindent\bf{Proof.}\rm}{\hfill$\blacksquare$\bigskip}
\def\opt{{\mbox{opt}}}
\begin{document}

\title{Mechanism design with uncertain inputs
\\
(to err is human, to forgive divine)}

\author{Uriel Feige~\thanks{Department of Computer Science and Applied Mathematics, Weizmann Institute, Rehovot, Israel. {\tt uriel.feige@weizmann.ac.il.} Work done at Microsoft R\&D Center, Herzelia, Israel.}
and Moshe Tennenholtz~\thanks{Microsoft R\&D Center, Herzelia, Israel, and Faculty of Industrial Engineering and Management, Technion, Haifa, Israel.  {\tt moshet@ie.technion.ac.il}}}

\maketitle

\begin{abstract}
We consider a task of scheduling with a common deadline on a single machine. Every player reports to a scheduler the length of his job and the scheduler needs to finish as many jobs as possible by the deadline. For this simple problem, there is a truthful mechanism that achieves maximum welfare in dominant strategies. The new aspect of our work is that in our setting players are uncertain about their own job lengths, and hence are incapable of providing truthful reports (in the strict sense of the word). For a probabilistic model for uncertainty our main results are as follows.

\begin{enumerate}

\item Even with relatively little uncertainty, no mechanism can guarantee a constant fraction of the maximum welfare.

\item To remedy this situation, we introduce a new measure of economic efficiency, based on a notion of a {\em fair share} of a player, and design mechanisms that are $\Omega(1)$-fair. In addition to its intrinsic appeal, our notion of fairness implies good approximation of maximum welfare in several cases of interest.

\item In our mechanisms the machine is sometimes left idle even though there are jobs that want to use it. We show that this unfavorable aspect is unavoidable, unless one gives up other favorable aspects (e.g., give up $\Omega(1)$-fairness).

\end{enumerate}

We also consider a qualitative approach to uncertainty as an alternative to the probabilistic quantitative model. In the qualitative approach we break away from solution concepts such as dominant strategies (they are no longer well defined), and instead suggest an axiomatic approach, which amounts to listing desirable properties for mechanisms. We provide a mechanism that satisfies these properties.
\end{abstract}

\newpage

\section{Introduction}

The main theme of this paper is that of mechanism design in situations in which players are uncertain about their own private inputs. Rather than address this issue in full generality, we concentrate here on one particular task, in hope that the insights provided from its study will be relevant for other tasks as well. We call this task SSCD (Strategic Scheduling with a Common Deadline).
There are $n$ players $P_1, \ldots, P_n$. Every player has a single job, with job $J_i$ belonging to player $P_i$. The length of job $J_i$ is denoted by $\ell_i$, and we assume that the value of $\ell_i$ is given as private input to player $P_i$. There is a single machine that can process the jobs one at a time, and a common deadline $D$. The deadline can be interpreted as the machine being active for $D$ time steps, and not available afterwards. Each player wants his job to finish before the deadline. Hence we view the utility function of players to have value~1 if their own job finishes by time $D$, and~0 otherwise. There is a {\em scheduler} that schedules the jobs on the machine. The operations available to the scheduler are to start a job, abort a job, and in cases where preemption is allowed, to resume a previously aborted job. For deciding which operations to perform, the scheduler can solicit private information from the players (ask them to report the job lengths), and observe when a job in completed (and hence schedule a new job).
The goal of the scheduler is to maximize social welfare, which in our case means to have as many jobs as possible finish before time $D$.

SSCD can easily be solved in dominant strategies in the setting described above. Each player is asked to report the length of his job. Let the reported lengths be $r_1, \ldots, r_n$. Sort the jobs in order of increasing $r_i$, and run in this order each job $J_i$ for $r_i$ time steps, until $D$ time steps are reached. If all reports are truthful, meaning $r_i = \ell_i$, then this schedule runs the shortest jobs first, and will indeed complete the maximum possible number of jobs in the allotted time. Moreover, being truthful is a dominant strategy for all players. If  a player reports  a value $r_i$ smaller than $\ell_i$, he will not be given sufficient time to complete his job (and his payoff will necessarily be~0). If a player reports a value $r_i$ larger than $\ell_i$, this may only delay the time his job is scheduled, and hence cannot increase his payoff. (We remark that this mechanism is not immune to sibling attacks. A long job might be split into several smaller parts, and each part might be submitted to the scheduler on behalf of a different player. In our current simplified model we assume no sibling attacks.)

In the current work we investigate situations in which players have some uncertainty about their private inputs. For SSCD, this means that a player has some doubts whether his private value $\ell_i$ is the true length of his job. Hence the value of $\ell_i$ is a belief, rather than actual knowledge. Such doubts are reasonable in many contexts (if $J_i$ is a computer program, it may be very difficult to estimate beforehand how long it will run). It is instructive to discuss informally what might happen in this case with the shortest first mechanism. Suppose that the players believe that the distribution of lengths of jobs is such that the scheduler can easily schedule the shorter jobs, but will not have time to schedule the longer jobs. Then it is natural for owners of shorter jobs to report a value of $r_i$ that is somewhat larger than their believed $\ell_i$. They would not want a slight error in their beliefs to cause their jobs to abort after $\ell_i$ steps, when in fact the jobs are only slightly larger and could easily have finished in a few additional steps. On the other hand, owners of long jobs will be tempted to report a short length. They have nothing to lose (reporting a long length they would not be scheduled at all), and they may potentially gain (as they are not really certain about the length of their jobs, and maybe their jobs are short). Such strategic reports might lead to a situation in which the scheduler schedules the long jobs first, gives them insufficient time, and eventually no job finishes.

The main aspect that the example above illustrates is that mechanisms that perform perfectly when players have accurate information about their private inputs may perform arbitrarily bad when players have uncertainty about their private inputs. This of course is not a new observation, and can be taken as folklore. The new aspect of our work is addressing this issue in a formal way.

Let us elaborate on which kinds of private input uncertainty refers to in our context. The private input to a player can be of two kinds. One is subjective, with the utility function of a player serving as a prime example. In much of game theory literature, the {\em type} of a player refers only to this aspect. However, the kind of private input that we refer to is objective. It is a parameter (length of job) that is associated with a player that is initially private, but eventually may become public. Hence if a player misreports the value of this parameter, then it is in principle possible to eventually detect this, by comparing the report with the actual truth. This allows one to introduce ``punishments" into mechanisms, and in extreme cases these punishments might be disproportional to the harm done by misreporting (presumably, under the assumption that rational players will never be subject to them). These punishment mechanisms are very effective in theoretical mathematical studies, but might be inappropriate in real life situations in which uncertainty might be inherent. Recalling the saying of Alexander Pope, {\em to err is human, to forgive divine}, we acknowledge that errors in the reports of players will happen, and wish the mechanism to be {\em forgiving}: small errors on behalf of players should not lead to severe consequences.

Forgiveness with respect to misreporting of private inputs is not a goal by itself, but rather a useful principle towards the design of better mechanisms. It should be used in a clever way that discourages manipulations by the players, and leads to desirable performance guarantees.

\subsection{The underlying model}

There is more than one way of interpreting uncertainty about private inputs, and the different interpretations may lead to different mechanisms.

Job lengths can be modeled in one of three ways:

{\bf Deterministic length.} Each job has a fixed length. Given sufficient resources, the player would be able to determine the exact length of his job. The uncertainty in length is a result of complexity: it might be unrealistic to expect  the player to invest the resources needed in order to obtain the exact information. However, it may be realistic to expect the player to provide an estimate of the length of the job.

{\bf Random length.} The length of a job is actually a random variable. This may be the case when the job itself is a randomized algorithm: the algorithm itself is known, but the random coin tosses (which affect the running time) are performed only at run time. Hence there is some probability distribution that exactly characterizes the length of the job.

{\bf Nondeterministic length.} The length of the job may depend on various aspects that are not known beforehand (e.g., how fragmented will the disk space be at the time the job is run).

Our work is mainly relevant to situations that resemble deterministic or random job lengths, and not very relevant to the nondeterministic setting.

The private input of players (their beliefs) can be modeled in one of two ways:

{\bf Quantitative private input.} The private input is a probability distribution over job-lengths. For random length jobs, the private input may (or may not) actually coincide with the true probability distribution. For other models of job lengths, the private input models beliefs of the player, based on partial information available to him. We need not assume that the player possesses an explicit probability distribution. It suffices that the player only has preferences over various lotteries (e.g., prefers to get $t_1$ time units with probability $p_1$ over getting $t_2$ time units with probability $p_2$), and if these preferences are consistent (in some well defined way), this is mathematically equivalent to having a probability distribution over lengths. (This last line of argument is similar to that used by Von Neumann and Morgenstern in defining utility functions.)

{\bf Qualitative private input.} The private input is a single job length. It is interpreted as an estimate of the true length, but there is no quantitative model explaining to which extent the player trusts this estimate. In particular, the preference of a player over various lotteries might be inconsistent with any probability distribution over lengths.

Most of our work will be concerned with the combination {\em random length / quantitative private input}, and specifically the special case when the private input coincides with the true distribution. However, we shall also briefly address the case of {\em deterministic length / qualitative private input}.

In our setting, the players report lengths (or distributions) for their jobs, and the scheduler produces a schedule. We distinguish between three types of reports.

{\bf Honest report.} A player reports his private input.

{\bf Rational report.} A player reports whatever is perceived to maximize his payoff.

{\bf Malicious report.} The report of the player might be arbitrary.

In general, we seek mechanisms in which the honest report is rational. For quantitative private inputs, this will most often mean implementation in dominant strategies, rather than Nash equilibrium. Solution concepts based on equilibria require extensive knowledge of the players about their environment (what strategies are other players playing, what are the true lengths of jobs of other players), and hence might be questionable in situations where players are uncertain even about their own inputs. For qualitative private inputs, the notion of a rational strategy is not well defined, and we shall need to explore other solution concepts.

At any time, only one job is run on the machine. Upon receiving reports from the players, a {\bf nonadaptive} scheduler allocates time intervals to the jobs, with total length of time intervals not exceeding $D$. A job {\em starts} at the beginning of its time interval, and is {\em aborted} at the end of the time interval (if not finished by that time). An {\bf adaptive} scheduler allocates time intervals to jobs one at a time, and if the job finishes before the end of its interval, the remaining time steps may be used for other jobs.
For schedulers with {\bf preemption} (which necessarily are also adaptive), a job that was previously aborted may be {\em resumed}.
We assume that there is no overhead involved in preemption. We do not claim that this model of preemption is realistic, but we do find it informative to consider it. A scheduler may be {\bf deterministic}, in which case the job currently run is determined only based on the reports of players and actual observed running times; or {\bf randomized}, where the scheduler can in addition base its decisions on random coin tosses. Most of our mechanisms are randomized and involve no preemption.

We remark that our current study is concerned only with mechanisms that do not involve monetary transfer. A player cannot offer to pay for the right to run his jobs for a certain number of time units.

\subsection{Main results for random length / quantitative private input}

We use the following notation. There are $n$ players and a single machine that can run for a total of $D$ units of time. The private input to player $P_i$ is a probability distribution $f_i$ and is assumed to be correct: for each $j$, $f_i(j)$ specifies the true probability with which job $J_i$ finishes in at most $j$ time steps. The goal of the mechanism is to schedule the jobs so as to maximize the number of jobs that finish within the allowed $D$ time steps (or maximize the expectation of this number in probabilistic settings). Recall that for the case of deterministic jobs (and quantitative private input that is accurate), maximum welfare can be achieved in dominant strategies by the shortest first mechanism. Our first result concerns the case in which there is only little uncertainty about job lengths.

\begin{definition}
\label{def:neardeterministic}
A job is {\em near deterministic} if there are only two values that its length can have, and they differ from each other by a factor of at most~2.
\end{definition}

The choice of a factor of at most 2 in Definition~\ref{def:neardeterministic} is to a large extent arbitrary. Any constant factor larger than 1 will give results in the same spirit as those that we prove below.

\begin{thm}
\label{thm:neardeterministic}
For every strategic scheduler (even randomized, with preemption, and using Nash-equilibrium as a solution concept), there are collections of near deterministic jobs for which $2^{(O{\sqrt{\log n}})}$ jobs can finish, whereas the expected number of jobs that the scheduler completes is $O(1)$.
\end{thm}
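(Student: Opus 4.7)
The plan is to exhibit, for each large $n$, an $n$-player near-deterministic instance on which any strategic mechanism at Nash completes only $O(1)$ jobs in expectation, while a truthful shortest-first schedule completes $2^{\Omega(\sqrt{\log n})}$ jobs. Let $k=\lfloor\sqrt{\log n}\rfloor$. The instance contains $\Theta(2^k)$ \textit{real} jobs organized into $k+1$ geometric levels---level $i$ has $n_i=2^{k-i}$ jobs, each near-deterministic with the two possible lengths $L_i=2^i$ and $2L_i$---together with $n-\Theta(2^k)$ \textit{decoy} jobs whose near-deterministic lengths are so long that no decoy can finish within the deadline. Since decoys get payoff $0$ no matter what they report, every report profile for them is a best response, so at Nash I am free to make them report whatever is most damaging to the scheduler. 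Calibrating $n_i L_i\approx 2^k$ and $D=\Theta(k\cdot 2^k)$, the truthful shortest-first optimum completes all real jobs, giving $\mathrm{OPT}=2^{\Omega(\sqrt{\log n})}$.

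The core of the proof is a pooling claim: at every Nash equilibrium (with the decoy strategies chosen adversarially), the reports of real high-level players and of decoys collapse into a common equilibrium distribution, so the scheduler faces a symmetric pool of $n$ reports in which only the fraction $\Theta(2^k/n)$ come from real short jobs. I would establish this by backward induction on the real levels. In any putative Nash, consider the highest level $\ell$ whose report distribution is distinguishable from that of level $\ell-1$. Because honest high-level players are scheduled too late to finish, a level-$\ell$ player has a profitable deviation by sampling his report from the level-$(\ell-1)$ equilibrium distribution: the deviation cannot worsen his current zero payoff, and with positive probability it leads to his job being scheduled early enough to finish, contradicting equilibrium. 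Once symmetry is established, a direct time-budget argument---bounding the scheduler's expected number of productive ``attempts'' by $O(D)$ even under preemption and adaptive aborting---caps the expected completions at $O(D\cdot 2^k/n)+O(1)$, which with $n\gg 2^{2k}$ is $O(1)$.

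The main obstacle will be making the pooling step rigorous against adaptive, randomized, preemptive mechanisms and under the weaker Nash rather than dominant-strategy solution concept. For preemption and adaptivity one must show that resuming or reallocating previously aborted jobs does not give the scheduler enough extra information about player identities to break the symmetry of the pool before the deadline is exhausted; the amortized bookkeeping has to charge each ``chunk'' of machine time to the symmetric-pool sample that occupied it. For Nash, one must ensure the deviation argument evaluates the deviating player's profitability inside the equilibrium, measured against the other players' equilibrium-induced report distributions, rather than against a worst case. I expect both subtleties to be handled by a careful choice of decoy distributions that makes the pooled report distribution rich enough to render the level-$\ell$ player's imitation deviation genuinely profitable, together with a convexity-of-expected-utility argument over the scheduler's randomness.
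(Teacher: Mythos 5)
There is a genuine gap, and it starts with the solution concept rather than with the technical details you flag at the end. Your lower bound is driven by letting the indifferent decoys ``report whatever is most damaging to the scheduler,'' i.e.\ by adversarial selection among best responses / equilibria. The theorem (and the paper's proof, which invokes the revelation principle) evaluates the mechanism at the truthful equilibrium: one must show that \emph{no} mechanism for which honest reporting is an equilibrium can do well on the instance. Under that reading your instance simply fails: consider the mechanism that asks for the nominal level, sorts reported lengths increasingly, and runs each job for twice its reported nominal length. With your parameters ($\sum_i n_i\cdot 2L_i=2(k+1)2^k\le D$) this completes \emph{every} real job at the truthful profile, and truth-telling is a (weak) best response for all players -- decoys are indifferent, and a real player who under-reports only risks being cut off early. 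So the construction itself, not just its analysis, does not establish the theorem once you are not allowed to break ties against the mechanism.

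The second gap is the pooling lemma, which is asserted rather than proved and is false in general: for an arbitrary randomized, adaptive, preemptive mechanism there is no reason that high-level players get zero payoff in equilibrium, nor that sampling a report from the level-$(\ell-1)$ distribution yields strictly positive payoff (the scheduler may treat such reports however it likes, and separating equilibria are perfectly consistent with Nash). The paper's proof avoids any such global symmetrization. Its instance has group sizes that \emph{grow} by a huge factor $k=2D$ per level (no decoys), with each job of group $J_i$ of length $2^{i-1}$ w.p.\ $\epsilon=1/D$ and $2^i$ otherwise, and only group $J_0$ is completable (OPT $=D$). Writing $u_i$ for the expected mass of group-$i$ jobs that receive their full time $2^i$, it shows that welfare above $O(1)$ forces $u_i>2^{-i}$ for some $i$, and then compares the instance with a neighboring one in which a single job moves from group $i$ to group $i+1$: the equilibrium condition against impersonating a group-$i$ job yields the quantitative inequality $\epsilon\ell'_{i+1}+u'_{i+1}\ge k\epsilon u_i$, which cascades ($u'_{i+1}>2^{-(i+1)}$, and so on) until it contradicts $u_d=0$. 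This local impersonation inequality between two nearby instances, amplified by the rapidly growing group sizes, is the missing idea; the parameter balance $n\approx D^{\Theta(\log D)}$ is what produces the $2^{\Theta(\sqrt{\log n})}$ gap, and it also handles adaptivity, preemption and randomization for free, since the inequality is stated purely in terms of expected allocations.
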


The bounds in Theorem~\ref{thm:neardeterministic} are essentially tight -- we show mechanisms that guarantee an approximation ratio of $2^{(O{\sqrt{\log n}})}$ for the maximum welfare (in the near deterministic / random length / quantitative private input / dominant strategies setting).

Theorem~\ref{thm:neardeterministic} illustrates that even with small uncertainty in private inputs, no mechanism has good performance guarantees with respect to welfare. In a sense, this is a theorem about {\em price of strategic uncertainty} (PoSU): if either uncertainty or strategic behavior (as opposed to honest reports) is removed, this price need not be paid. This may be viewed as an extension of previous notions that refer only to strategic behavior, such as {\em price of anarchy} and {\em price of stability}.

Given the large PoSU with respect to welfare, we suggest a different performance measure. The guarantee is per player, rather than some aggregation in the form of total welfare. The guarantee corresponds to the probability of the job finishing if all other jobs were identical to it and the optimal schedule was used. We call this the {\em fair share} of the player. To simplify the presentation that follows, we shall use an approximate notion of fair share that we shall now introduce. Recall that the running time of a player is given in form of a probability distribution $f_i$. If all players have the same $f_i = f$, we may consider a nonadaptive scheduler that we call {\em canonical}: compute $t \ge D/n$ that maximizes the probability of success per time step, namely $f(t)/t$, and run $D/t$ jobs, each for $t$ time steps. Proposition~\ref{pro:fairshare} shows that the canonical scheduler guarantees in expectation a constant fraction of the optimal expected welfare, even compared to schedules that use preemption.

\begin{proposition}
\label{pro:fairshare}
Given $n$ identical jobs with probability distribution $f$ and deadline $D$, the welfare of the optimal schedule (even with preemption) is at most~3 times that of the canonical schedule.
\end{proposition}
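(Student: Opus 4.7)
The plan is to derive a simple upper bound on the expected welfare of any preemptive schedule via a threshold partition, and then to verify that this bound is within a factor of three of $W_{\mathrm{can}}$.

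Fix any preemptive (even adaptive) schedule $\pi$ and any threshold $\tau > 0$. For each realization of the lengths $\ell_1, \ldots, \ell_n$, let $S$ be the set of jobs that $\pi$ completes and split $S$ into \emph{short} completed jobs ($\ell_i < \tau$) and \emph{long} completed jobs ($\ell_i \ge \tau$). The short completed jobs are at most the total number of jobs with length less than $\tau$, whose expectation is $n\,f(\tau)$. Each long completed job is processed to completion for $\ell_i \ge \tau$ units of time, and since the total processing time is at most $D$, there can be at most $D/\tau$ of them. Taking expectations gives the key inequality
\[
W^\pi \;\le\; n\,f(\tau) \;+\; D/\tau
\]
for every $\tau > 0$ (and every such schedule $\pi$).

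To compare with the canonical, let $t^* \ge D/n$ be the canonical's threshold, $k = \lfloor D/t^* \rfloor$, and $q^* = f(t^*)/t^*$, so that $W_{\mathrm{can}} = k\,f(t^*)$. Two lower bounds on $W_{\mathrm{can}}$ will be handy. From $k \ge D/(2 t^*)$, valid whenever $k \ge 1$, I get $W_{\mathrm{can}} \ge D q^*/2$, i.e.\ $D q^* \le 2 W_{\mathrm{can}}$. From the maximality of $f(t)/t$ at $t = D/n$, I get $q^* \ge n\,f(D/n)/D$, and hence $n\,f(D/n) \le D q^* \le 2 W_{\mathrm{can}}$.

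To close the argument, I would substitute suitable values of $\tau$ into the threshold upper bound. Taking $\tau = t^*$ gives $W^\pi \le n\,f(t^*) + D/t^*$, where $D/t^* < k+1 \le 2 W_{\mathrm{can}}/f(t^*)$ and $n\,f(t^*) = (n/k)\,W_{\mathrm{can}}$; this handles the regime in which $n$ is close to $k$. In the complementary regime $n \gg k$, taking instead $\tau = D/n$ gives $W^\pi \le n\,f(D/n) + n \le 2 W_{\mathrm{can}} + n$, which, combined with the trivial $W^\pi \le n$, yields the factor $3$ after a short case split on whether $n \lessgtr W_{\mathrm{can}}$.

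The hard part will be this coordination: depending on the shape of $f$ (smoothly increasing versus having large jumps) and on the ratio $n/k$, neither single choice of $\tau$ delivers the factor $3$ on its own. In particular, when $f$ has a jump, one should pick $\tau$ just below the jump so that $n f(\tau)$ counts only the strictly shorter jobs while $D/\tau$ remains as small as possible; and when $n$ is much larger than $k$ but $f(t^*)$ is small, the only way to avoid losing more than a factor $3$ is to fall back on the trivial bound $W^\pi \le n$ combined with the two lower bounds on $W_{\mathrm{can}}$ simultaneously. Verifying that the three regimes (threshold $\tau = t^*$, threshold $\tau = D/n$, and the trivial bound) together cover all possible $f$, $n$, $D$ with a uniform factor $3$ is the main technical hurdle.
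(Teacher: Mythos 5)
There is a genuine gap here, and it is not just the ``coordination'' you flag at the end: your key inequality is too weak to ever yield the factor~3. The bound $W^\pi \le n f(\tau) + D/\tau$ is proved realization by realization, with no reference to what the scheduler knows; hence it equally upper-bounds the welfare of a \emph{clairvoyant} scheduler that sees all realized lengths and runs shortest-first. But the proposition is false for the clairvoyant optimum, so no argument built solely from such inequalities (plus the trivial $W^\pi \le n$) can prove it. Concretely, take the paper's own near-tight instance: mass $1/k$ at length (essentially) $0$ and the remaining mass spread uniformly on $[D/n, kD/n]$, i.e.\ $f(t)=1/k$ on $[0,D/n]$ and $f(t)=tn/(Dk)$ on $[D/n,kD/n]$. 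Here $f(t)/t \equiv n/(Dk)$ on the allowed range, so the canonical welfare is $n/k$, and the true (non-clairvoyant) optimum is only about $3n/k$. Yet $\min_{\tau}\bigl[n f(\tau)+D/\tau\bigr]$ is attained at $\tau = D\sqrt{k}/n$ and equals $2n/\sqrt{k}$ (for $\tau \le D/n$ or $\tau \ge kD/n$ the bound is at least $n$), which exceeds $3\,W_{\mathrm{can}} = 3n/k$ by a factor of order $\sqrt{k}$; the trivial bound $n$ is worse still. So neither $\tau = t^*$, nor $\tau = D/n$, nor any other single threshold, nor a case split among them, can close the argument. (Indeed the clairvoyant welfare in this instance really is $\Theta(n/\sqrt{k})$, which is why your bound cannot be improved without using non-clairvoyance.)

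The missing idea is an accounting that charges the schedule for the time it must spend on jobs it cannot distinguish in advance -- in particular for the time wasted on attempts that do not finish -- rather than counting all jobs shorter than a threshold as free. This is what the paper's proof does: it replaces $f$ by a dominating continuous piecewise-linear function $g$ with at most two pieces, observes that the optimal preemptive schedule for $g$ has a simple two-phase form (run every job infinitesimally, then run survivors until time runs out), bounds its welfare by $n g(0) + 2n\frac{g(D/n)-g(0)}{1-g(0)}$, and then shows the canonical scheduler for $f$ at the two candidate values $t = D/n$ and $t = p$ (the point where $f$ meets the first piece of $g$) captures at least each of these two terms, hence at least a third of their weighted sum. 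If you want to salvage your approach you would need to strengthen the upper bound so that each completed job is charged its actual (expected) running time under the scheduler's information constraints, which effectively forces something like the envelope construction.
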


Computing the {\em fair share} with respect to the canonical scheduler (rather than with respect to the optimal one) gives the expression appearing in Definition~\ref{def:fairshare}.

\begin{definition}
\label{def:fairshare}
Given $n$ jobs and deadline $D$, the {\em fair share} for player $P_i$ with private input $f_i$ is defined to be  $\max_t [(f_i(t)/t) \cdot (D/n)]$, where the maximization is over the range $D/n \le t \le D$.
\end{definition}

\begin{definition}
For $0 < \rho \le 1$, a scheduler is $\rho$-fair if for every player it guarantees a finishing probability of at least $\rho$ times the player's fair share.
\end{definition}

We view the notion of $\rho$-fair schedulers as an adequate replacement to the notion of schedulers that maximize welfare. Fairness provides guarantees to players even if all other players are malicious (or totally uninformed regarding the true lengths of their jobs). Schedules that attempt to maximize welfare need not provide any guarantees in these cases. Technically, maximizing fairness can be incompatible with maximizing welfare. In the former case long jobs must be given some probability of finishing, whereas in the latter they must be discarded. However, fairness does correspond to good approximation of maximum welfare whenever the sum of fair shares is close to the maximum welfare. This happens in many natural cases, e.g., when private inputs of players are similar to each other, or if the instance is such that the optimal welfare is linear in $n$. (The proofs of these simple observations are omitted.)

Our notion of fairness is different from some classical notions of fairness (such as proportional division~\cite{BramsTaylor}, or max-min payoff~\cite{fairness}) that postulate that different players are entitled to equal shares of the payoff. In our notion, players that do not place high demand on the resources of the machine are entitled to higher payoffs. The setting is also different: we deal with a homogenous good (time on a machine) and nonlinear utility functions, whereas much of the previous literature (say, on cake cutting) deals with nonhomogeneous goods and linear utility functions.  Our notion of fair-share is inspired by the notion of {\em recoverable value} which was introduced in~\cite{FIMN} so as to get good approximations in interesting special cases for a problem that is hard to approximate in the worst case.

Fair schedulers are randomized by definition, as every player must be guaranteed some probability of finishing. If all $f_i$ are identical, no schedule can be more than $O(1)$-fair. In fact, $O(1)$ upper bounds on fairness hold also in other settings (implicit in the proof of Proposition~\ref{prop:nooblivious}). On the positive hand we have:

\begin{thm}
\label{thm:strategicfair}
There is a nonadaptive scheduler in dominant strategies that is $1/2$-fair.
\end{thm}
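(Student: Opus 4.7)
The plan is to define a mechanism that, given reports $g_1,\ldots,g_n$, computes for each player $i$ the target slot length $s_i:=\arg\max_{t\in[D/n,D]} g_i(t)/t$ and the marginal $p_i:=D/(2ns_i)\in[1/(2n),1/2]$, and then draws a random subset $S\subseteq[n]$ from a joint distribution satisfying (i) $\sum_{i\in S}s_i\le D$ almost surely and (ii) $\Pr[i\in S]=p_i$ for every $i$; each $i\in S$ is scheduled for $s_i$ time units in some arbitrary nonadaptive layout. Once such a distribution is in hand, both dominant-strategy truthfulness and $1/2$-fairness are immediate: because $\Pr[i\in S]=D/(2ns_i)$ depends on player $i$'s report only through $s_i$ and is independent of the reports of others, player $i$'s success probability is $p_i f_i(s_i)=Df_i(s_i)/(2ns_i)$, which over $s_i\in[D/n,D]$ is maximized at $s_i=t_i^*(f_i)$, i.e.\ under truthful reporting; and at truthful reporting this success probability equals $(1/2)\cdot (f_i(t_i^*)/t_i^*)(D/n)$, exactly half of player $i$'s fair share.

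The main obstacle, and the technical heart of the proof, is to show that the required joint distribution exists. Observe that $p=(p_i)$ lies in the LP relaxation $\{x\in[0,1]^n:\sum x_is_i\le D\}$ of the knapsack polytope, in fact with $\sum p_is_i=D/2$, but the $0/1$-knapsack polytope (the convex hull of feasible integer subsets) is strictly smaller in general, so LP-feasibility by itself is not enough. The factor $1/2$ in the theorem corresponds precisely to using only half of the LP capacity, leaving a factor of $2$ of slack that, as we shall see, absorbs the integrality gap.

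I would prove existence by LP duality (Farkas' lemma): $p$ lies in the convex hull of the feasible $0/1$ vectors iff for every $\beta\in\mathbb{R}^n$, $\sum_i\beta_ip_i\ge \min_{S\text{ feasible}}\sum_{i\in S}\beta_i$. Splitting $\beta=\beta^+-\beta^-$ into nonnegative parts, the minimum equals $-V^*(\beta^-)$, where $V^*(c)$ denotes the $0/1$-knapsack optimum with item values $c_i\ge0$, sizes $s_i$, and capacity $D$; hence it suffices to verify $\sum_i\beta_i^-p_i\le V^*(\beta^-)$. The vector $y_i:=D/(ns_i)=2p_i$ is LP-feasible for this knapsack ($y_i\in[0,1]$ since $s_i\ge D/n$, and $\sum y_is_i=D$), so the LP optimum is at least $\sum_i\beta_i^-\cdot D/(ns_i)=2\sum_i\beta_i^-p_i$. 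Combining with the classical fact that the $0/1$-knapsack integrality gap is at most $2$ (any LP optimum has at most one fractional item, so $V^*\ge \mathrm{LP\,OPT}/2$) then yields $V^*(\beta^-)\ge\sum_i\beta_i^-p_i$, as needed.
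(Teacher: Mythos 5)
Your proposal is correct, and at the level of the mechanism it coincides with the paper's fair-share scheduler: the same target marginals (selection probability proportional to $(D/n)/s_i$, scaled down by $1/2$), the same observation that the marginal depends only on the player's own report so that truthful reporting is dominant, and the same accounting giving exactly half the fair share. Where you genuinely diverge is in the technical heart, namely realizing these marginals by a joint distribution over feasible sets. The paper does this \emph{constructively}: a coupon/event assignment (equivalently, the sorted-interval scheme in its remark) achieves marginals $u/t_i$ exactly while keeping each realized schedule of length at most $nu+M$; taking $u=D/n$ and $M=D$ gives schedules of length up to $2D$, which are then split into two feasible schedules with one chosen at random -- that split is where the paper loses its factor $2$. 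You instead prove existence nonconstructively: a separating-hyperplane/Farkas reduction to a family of $0/1$-knapsack inequalities, each verified by exhibiting the LP-feasible point $y_i=D/(ns_i)$ of value $2\sum_i\beta_i^- p_i$ and invoking the knapsack integrality gap of $2$ (valid here precisely because every $s_i\le D$, so the single fractional item fits on its own -- worth stating explicitly, and note the standard fact is that \emph{some} optimal basic LP solution has at most one fractional variable). Both routes lose the same factor of $2$, and your argument isolates nicely what that loss buys: slack to absorb the integrality gap. What the paper's route buys in exchange is an explicit, efficiently samplable lottery and side benefits (e.g., the $\frac{D-M}{D}$-fair variant in which jobs of length at most $(D-M)/n$ run with certainty); to turn your existence proof into an actual scheduler one would still need a constructive decomposition of $p$ into feasible sets (Carath\'eodory plus an LP-based decomposition would do), which is essentially what the paper's coupon argument supplies directly.
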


{\em Oblivious} schedulers ignore the reports of players and hence trivially involve dominant strategies. Theorem~\ref{thm:strategicfair} can be contrasted with the following theorem (proved in the appendix).

\begin{thm}
\label{thm:oblivious}
Every oblivious scheduler (even with preemption) is at most $O(1/\log n)$-fair. There is a nonadaptive oblivious scheduler that is $\Omega(1/\log n)$-fair.
\end{thm}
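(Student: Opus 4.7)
My plan is to prove both directions using the same geometrically spaced family of instances. For the upper bound, I consider the family $I_0, I_1, \ldots, I_{\log n}$, where in instance $I_k$ every one of the $n$ players has a deterministic job of length $\ell_k = 2^k D/n$. Since $f_i(t) = 1$ for $t \geq \ell_k$ and $\ell_k \geq D/n$, Definition~\ref{def:fairshare} gives fair share $1/2^k$ in $I_k$. Fix any oblivious scheduler (randomized, possibly preemptive) and let $T_i$ denote the random total time it allocates to player $i$, summed across all fragments. Because the scheduler is oblivious, the joint distribution of the vector $(T_1, \ldots, T_n)$ does not depend on the inputs, so $T_i$ has the same distribution in every $I_k$. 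Since preemption has no overhead, a deterministic job of length $\ell_k$ finishes iff $T_i \geq \ell_k$, and hence $\rho$-fairness requires $\Pr[T_i \geq \ell_k] \geq \rho/2^k$ simultaneously for every $k$ and every $i$.

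From these $\log n$ tail lower bounds, the tail-sum formula together with the telescoping $\ell_{k+1} - \ell_k = 2^k D/n$ gives $\mathbb{E}[T_i] \geq \sum_{k} (\ell_{k+1} - \ell_k)\Pr[T_i \geq \ell_{k+1}] = \Omega(\rho \log n \cdot D/n)$ for every player $i$. Summing over players yields $\sum_i \mathbb{E}[T_i] = \Omega(\rho D \log n)$, and this must respect the pointwise budget constraint $\sum_i T_i \leq D$, forcing $\rho = O(1/\log n)$. The pointwise nature of the budget is precisely what lets the argument cover preemption.

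For the positive direction I would exhibit the following nonadaptive oblivious scheduler. Draw $k$ uniformly from $\{0, 1, \ldots, \log n\}$, partition $[0, D]$ into $n/2^k$ consecutive slots of length $2^k D/n$, and assign the slots to a uniformly random set of $n/2^k$ distinct players. For a player $i$ with private input $f_i$, let $t^* \in [D/n, D]$ attain the maximum in Definition~\ref{def:fairshare} and let $k^*$ satisfy $2^{k^*} D/n \in [t^*, 2t^*]$. With probability $\Omega(1/\log n)$ the scheduler draws $k = k^*$; conditional on that event, player $i$ is allocated a slot with probability $1/2^{k^*} \geq D/(2 n t^*)$; conditional on both events, his job finishes with probability at least $f_i(2^{k^*} D/n) \geq f_i(t^*)$. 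Multiplying these three independent factors yields a finishing probability of $\Omega(1/\log n) \cdot (f_i(t^*)/t^*) \cdot (D/n)$, i.e.\ $\Omega(1/\log n)$ times the fair share.

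The main obstacle is the upper-bound direction under preemption. I need to interpret $T_i$ as the total time (across possibly many fragments) devoted to player $i$, invoke the zero-overhead assumption so that job completion is exactly the event $T_i \geq \ell_i$, and then exploit that a single random variable $T_i$, sampled from the scheduler's coins alone, must satisfy all $\log n$ tail constraints simultaneously. Once that framing is in place, the telescoping lower bound on $\mathbb{E}[T_i]$ combined with the global budget of $D$ immediately yields $\rho = O(1/\log n)$.
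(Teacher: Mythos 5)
Your positive direction is correct and is essentially the paper's own construction (a random threshold of the form $2^kD/n$ chosen from $\log n$ geometric scales, then a random subset of players run up to that threshold), so there is nothing to add there.

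The negative direction, however, has a genuine gap, and it is exactly at the point you flag as "the main obstacle." Your key claim --- that obliviousness makes the joint distribution of $(T_1,\ldots,T_n)$ independent of the instance --- is false for the schedulers the theorem quantifies over. In this paper ``oblivious'' only means that the reports are ignored; a preemptive scheduler is by definition adaptive, and an adaptive scheduler observes when jobs complete and may reallocate time accordingly. On your instances $I_k$, in which all $n$ jobs are deterministic and identical of length $\ell_k=2^kD/n$, this feedback is devastating for your argument: consider the oblivious adaptive scheduler that picks a uniformly random job, runs it to completion (learning $\ell_k$), and then gives each of the next randomly chosen jobs exactly $\ell_k$ time until the deadline. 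On every $I_k$ it completes about $D/\ell_k=n/2^k$ jobs, so by symmetry each player finishes with probability about $2^{-k}$, i.e.\ it is $\Omega(1)$-fair on the entire family. Hence no argument confined to the family $\{I_k\}$ can yield the $O(1/\log n)$ bound ``even with preemption''; your telescoping bound $\mathbb{E}[T_i]=\Omega(\rho\log n\cdot D/n)$ is valid only for nonadaptive oblivious schedulers, where $T_i$ really is determined by the coins alone.

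The paper avoids this by using a single randomized instance with \emph{independent} job lengths: $D=n$ and each job independently has length $2^i$ with probability $2^i/2n$, with players knowing their realized (deterministic) lengths while the scheduler ignores reports. Independence makes completion feedback essentially useless (observing one job says nothing about another, and rescheduling an unfinished job is at best a factor-$2$ improvement over a fresh one), so every oblivious scheduler, even preemptive, finishes only $O(1)$ jobs in expectation; on the other hand the expected sum of realized fair shares is $\Theta(\log n)$, and averaging the per-realization fairness guarantee over nature's draw forces $\rho=O(1/\log n)$. To repair your proof you would need to replace your identical deterministic instances by such an instance (or otherwise argue that adaptivity cannot help), rather than rely on input-independence of the allocation.
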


Our next set of results concerns another property of interest for schedulers.

\begin{definition}
A scheduler is {\em complete} if it has the property that whenever the sum of true job lengths is at most $D$, all jobs finish. When pre-emption is allowed, completeness is equivalent to not having idle time in which the machine is free and yet no unfinished job is scheduled on it.
\end{definition}

It is desirable that schedulers are complete, as it is difficult to justify why a machine is kept idle when there is demand for it. To present our result for complete schedulers, we provide one more definition.

\begin{definition}
A strategic scheduler is {\em trivial} if for every player there is a report that maximizes its expected probability of finishing, regardless of its $f_i$.
\end{definition}

Trivial strategic schedulers can simply be replaced by {\em oblivious} schedulers that ignore the reports of players, by assuming that players always report their maximizing report.

\begin{thm}
\label{thm:nocomplete}
\begin{enumerate}
\item Any complete scheduler with two players and dominant strategies (even allowing preemption) is trivial.
\item For three or more players, there are complete schedulers (with dominant strategies and no preemption) that are nontrivial.
\item Complete schedulers (with no preemption) cannot guarantee $(10/n)$-fairness in dominant strategies.
\end{enumerate}
\end{thm}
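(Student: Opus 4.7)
The plan is to prove each of the three parts in sequence, using a common structural observation about how completeness constrains the scheduler and then an adversarial instance for Part~3.

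\medskip
\noindent\textbf{Part~1.} For two players with preemption, completeness essentially pins down the time allocation independent of the reports. I would introduce $T_1(r_1,r_2,\ell_2)$, the maximum total time player~1 is allotted given reports $(r_1,r_2)$ and player~2's realized length $\ell_2$, under the assumption that player~1's job does not finish on its own. With preemption there is an adaptive trajectory $(x(t),y(t))$ of cumulative times granted to the two players while both are still unfinished, and by completeness (no idle time) the trajectory satisfies $x_{\max}+y_{\max}=D$. A case analysis on whether player~2 finishes inside the trajectory yields $T_1=D-\ell_2$ when $\ell_2\le y_{\max}$, and $T_1=x_{\max}$ otherwise; in the second case $\ell_1+\ell_2>x_{\max}+y_{\max}=D$, so completeness imposes no requirement and player~1 need not finish. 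Consequently player~1's expected payoff $E_{\ell_2\sim f_2}[f_1(T_1)]$ is maximized by the $r_1$ that maximizes $x_{\max}(r_1,r_2)$ universally in $r_2$; any such universal maximizer is a fixed report, independent of $f_1$, so the scheduler is trivial.

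\medskip
\noindent\textbf{Part~2.} I would exhibit an explicit complete no-preemption scheduler with dominant strategies for $n\ge 3$ whose dominant report depends on $f_i$. The crucial point is that for $n\ge 3$ the completeness constraint only pins down the allocation in the region $\sum_{j\ne i}\ell_j\le D$; outside this region the scheduler has combinatorial slack that can be used to couple the best report for player~$i$ to the shape of $f_i$. The construction I have in mind is a randomized adaptive sequential scheduler whose mixing weights over orderings depend on the reports in a way that creates nontrivial tradeoffs between preferred positions, so that different $f_i$ prefer different reports. Dominance and nontriviality are then verified by a direct case analysis of player-$i$ deviations.

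\medskip
\noindent\textbf{Part~3.} First I would observe that with no preemption, completeness essentially forbids aborts: if job~$i$ is aborted at any time $t<D$, then the realization $\ell_i=t+\epsilon$ together with sufficiently small $\ell_j$ for $j\ne i$ gives $\sum\ell\le D$ while job~$i$ fails to finish, contradicting completeness. Hence the scheduler reduces to a (possibly adaptive, randomized) sequential ordering $\pi$ that runs each started job to completion; its only freedom is the distribution over orderings as a function of the reports.

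Next I would construct an adversarial instance. Take $k$ ``short'' players with $f_s=\delta_L$ for $L\le D/n$ (giving ${\rm fair}_s=1$) and $n-k$ ``long'' players with $f_l=\delta_D$ (giving ${\rm fair}_l=1/n$). In this instance each short finishes exactly when it is placed before every long in $\pi$, and each long finishes exactly when it is placed in position~$1$, so the $(10/n)$-fairness requirements become a linear system in the scheduler's mixing weights over orderings. I would then split into two subcases based on whether the shorts and longs have the same dominant report. In the symmetric case the scheduler cannot distinguish the two types, so by symmetry each short has finishing probability $1/(n-k+1)$, which is well below $10/n$ whenever $k$ is small compared to $n$. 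In the asymmetric case the scheduler can distinguish the types, but then the aggregate fairness demands $k\cdot 10/n$ from the shorts and $(n-k)\cdot 10/n^2$ from the longs must fit into the total mass of orderings available, which becomes infeasible for a suitable choice of $k$.

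\medskip
The hard part will be Part~3, specifically the parameter tuning that simultaneously defeats both branches of the dichotomy and produces the constant $10$ stated in the theorem. The sequentiality reduction and the case analysis on dominant reports are conceptually clean, but pinning down the right $k$ (and possibly the right $L$) requires a careful linear-programming computation.
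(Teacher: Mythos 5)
Your Part~1 reduction (while both jobs are unfinished the cumulative allocations satisfy $x_{\max}+y_{\max}=D$, so player~1's running time is $\max(x_{\max},D-\ell_2)$) is a sound observation for a deterministic scheduler, but it stops exactly where the theorem's difficulty begins. The scheduler may be randomized, so $x_{\max}$ is a random variable whose law depends on $(r_1,r_2)$, and ``the report that maximizes $x_{\max}$'' is not defined; one needs to show that some report's law of $x_{\max}$ stochastically dominates every other report's law, simultaneously for all $r_2$. Moreover, even in the deterministic case the existence of a single $r_1$ maximizing $x_{\max}(r_1,r_2)$ for \emph{every} $r_2$ is precisely the statement to be proved, and it is where the dominant-strategies hypothesis must be used: the paper defines $p(t|r_1,r_2)$ (the law of your $x_{\max}$), and by pairing full-support private inputs concentrated near a length $t$ with the realization $\ell_2=D-t+1$ it forces any dominant report to sit at the top first of a suffix-dominance order and then of the full dominance order, uniformly in $r_2$; your sketch simply asserts this conclusion. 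Part~2 of your proposal contains no construction: the entire content of that item is an explicit mechanism, e.g.\ the paper's (run the current player until the remaining time drops to the number of remaining players minus one, then offer him a lottery of $t$ steps with probability $1/t$), whose nontriviality comes from the optimal lottery depending on $f_i$ and whose completeness comes from aborting only when the remaining players can absorb all remaining time.

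Part~3 has two problems. First, ``completeness forbids aborts'' is too strong: aborting is permitted once the remaining time is below the number of remaining unfinished jobs (this is exactly the slack that the Part~2 mechanism exploits); the correct statement is that a job cannot be aborted before that point. Second, and more seriously, your deterministic two-type instance does not defeat every scheduler, because the ``asymmetric'' branch of your dichotomy is not infeasible. A scheduler that could separate the types could, with probability $1-q$, run all shorts first (they all finish) and, with probability $q$, run one uniformly chosen long first; taking $q\approx 10(n-k)/n^2$ gives each short probability about $1-q\ge 10/n$ and each long probability $q/(n-k)=10/n^2$, and the aggregate demand $k\cdot 10/n+(n-k)\cdot 10/n^2$ is far below~$1$, so no counting argument applies. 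Your proof would therefore have to show that no complete dominant-strategy scheduler can achieve such a separation, which the sketch does not address and which is the genuinely hard part. The paper avoids this entirely by using a single symmetric instance of identically distributed heavy-tailed jobs ($D=n^2$; each job has length $n$ with probability $1/2$ and length uniform on $[1,n^2]$ with probability $1/2$): reports then carry no separating power, and the no-abort constraint alone forces the scheduler to complete only $O(1)$ jobs in expectation while every fair share is about $1/2$, giving the $O(1/n)$ bound without any case analysis on dominant reports.
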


With preemption there are complete schedulers that guarantee $\Omega(1/\log n)$-fairness in dominant strategies. (E.g., in a first phase run the oblivious scheduler of Proposition~\ref{thm:yesoblivious}. In a second phase run all remaining jobs.) If in addition one gives up dominant strategies, $\Omega(1)$-fairness can be achieved. The following proposition is proved in the appendix.

\begin{proposition}
\label{thm:complete}
For any number of players there is a scheduler with preemption that is complete, in which honest reporting is an ex-ante Nash equilibrium, and in this equilibrium the schedule is $\Omega(1)$-fair.
\end{proposition}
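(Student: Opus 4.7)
The plan is to instantiate a two-phase scheduler. Phase~1 runs the dominant-strategy $1/2$-fair nonadaptive scheduler of Theorem~\ref{thm:strategicfair}: for each report $r_i$ it computes $t_i^{\ast}\in\arg\max_{D/n\le t\le D} r_i(t)/t$ and selection probability $\pi_i=D/(2nt_i^{\ast})$, independently selects each player with probability $\pi_i$, and allocates a slot of length $t_i^{\ast}$ to every selected player. The slots are placed on $[0,D]$ in random order; a slot ends either when its job finishes or when its nominal length is reached (in the latter case the job is preempted). Phase~2 uses every moment of free machine time to run still-unfinished jobs, under a rule that is independent of the players' reports -- for concreteness, round-robin with preemption on the current set of unfinished jobs.

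Completeness is immediate from the excerpt's characterization: under preemption, completeness is equivalent to never leaving the machine idle while an unfinished job exists, which is exactly what Phase~2 enforces. Fairness under the honest profile is also immediate: when all reports are honest, Phase~1 is precisely the $1/2$-fair scheduler of Theorem~\ref{thm:strategicfair}, so player $i$ finishes in Phase~1 alone with probability at least $\pi_if_i(t_i^{\ast})\ge\phi_i/2$, and Phase~2 can only increase this.

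The key step is verifying that honest reporting is an ex-ante Nash equilibrium. Fix honest reports for all players other than $i$. Theorem~\ref{thm:strategicfair} tells us that Phase~1 in isolation is dominant-strategy truthful, so honesty uniquely maximizes $i$'s Phase~1 contribution $\pi_if_i(t_i^{\ast})$. The design of Phase~2 is what closes the argument: its allocation rule is report-independent, and by the identity $\pi_it_i^{\ast}=D/(2n)$ baked into the fair scheduler, the expected amount of free time that $i$ donates to Phase~2 is also the same no matter what $i$ reports. Writing $i$'s expected utility as $U_i=\pi_if_i(t_i^{\ast})+(1-\pi_if_i(t_i^{\ast}))\,g$, where $g$ is the conditional probability of finishing in Phase~2 given $i$ enters it unfinished, one checks (using honesty of the others to pin down the distribution of the number of competing unfinished jobs and of the free time not donated by $i$) that $g$ does not favor any particular deviation of $i$, so $U_i$ is monotone in the Phase~1 contribution and maximized at the honest report.

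The main obstacle is this last decoupling. Without a report-independent Phase~2, a player could under-report to donate time from Phase~1 to Phase~2 and benefit from the resulting slack. The identity $\pi_it_i^{\ast}=D/(2n)$, together with a round-robin Phase~2, is precisely what neutralizes this effect and makes honest reporting a best response given honesty of others; I expect this Nash-versus-dominant-strategy interaction, rather than the structural properties (completeness and fairness), to be where the proof needs the most attention.
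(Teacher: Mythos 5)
Your structural choices (two phases, fair-share in Phase~1, use all leftover time in Phase~2) match the paper's construction, and your completeness and fairness-at-honesty claims are fine. The gap is exactly at the point you flagged: the claim that the Phase~2 conditional finishing probability $g$ ``does not favor any particular deviation of $i$'' is not established, and it is false in general. A report-independent Phase~2 rule and the identity $\pi_i t_i^{\ast}=D/(2n)$ only control how much machine time $i$'s slot occupies in expectation; they do not control the \emph{state of $i$'s own job} when Phase~2 begins. With preemption, the $\min(\ell_i,t_i^{\ast})$ steps of work done in Phase~1 carry over, so the residual-length distribution of $i$'s job at the start of Phase~2 depends on $i$'s report (through both $t_i^{\ast}$ and the selection probability), and so does the probability of the conditioning event ``unfinished after Phase~1.'' Concretely, a player whose job deterministically needs $2D/n$ steps can report a distribution concentrated at $D/n$: his Phase~1 success probability drops to $0$, but he is selected with probability $1/2$ and then enters Phase~2 needing only $D/n$ more steps instead of $2D/n$; if Phase~2 typically provides on the order of $D/n$ steps per unfinished job, this deviation strictly beats honesty. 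So $U_i$ is not a monotone function of the Phase~1 contribution, and your equilibrium argument collapses. (A secondary issue: selecting players independently with probabilities $\pi_i$ does not keep the total committed time within $D$; the $1/2$-fair scheduler of Theorem~\ref{thm:strategicfair} uses a correlated selection precisely to ensure feasibility.)

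The paper's proof explicitly identifies this distortion (``the number of steps that the player anticipates to get in the second phase distorts his preferred report for the first phase'') and resolves it differently: rather than trying to decouple the phases, the mechanism internalizes the coupling. Each player reports $f_i$, and the mechanism itself computes, on the player's behalf and under the assumption that all other reports are truthful, the lottery that maximizes that player's \emph{overall} success probability, accounting for the time he expects to receive in Phase~2 (the appendix sketches a dynamic-programming implementation). With this delegation, if all others are honest then honest reporting is a best response by construction --- which is why the solution concept is ex-ante Nash equilibrium rather than dominant strategies --- and $\Omega(1)$-fairness survives because the mechanism's choice is at least as good for the player as the plain fair-share lottery (Phase~2 only helps), while completeness comes from Phase~2 and from the initial step of running a random job to completion with probability $1/2$ to cover long jobs. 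To repair your write-up you would need either this delegation idea or a genuine proof that no deviation can trade Phase~1 success for a better Phase~2 position; the latter is precisely what fails.
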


\subsection{Main results for deterministic length / qualitative private inputs}

Oblivious schedulers clearly maintain their performance guarantees when players have qualitative rather than quantitative private inputs (since the reports of players are ignored). To improve over oblivious schedulers, one needs to assume that the qualitative private inputs are actually good estimates to the length of the jobs, and moreover, to design a mechanism that makes it desirable for players to reveal their private inputs. This last requirement is a conceptual challenge, because the qualitative definition of private inputs does not suffice in order to offer predictions of how a player will react to incentives. We circumvent this conceptual difficulty by instead offering an axiomatic approach that lists a set of properties that it is desirable that mechanisms have. These properties are guarantees to a player that hold regardless of the reports and actual job lengths of other players.

\begin{enumerate}

\item Error-monotonicity. The player's expected payoff is maximized if his report is the true length, and decreases monotonically as the error grows.

\item Error-symmetry. If the true length of a job is denoted by $\ell_i$, then for every error $e > 0$, the expected payoff when reporting $\ell_i + e$ is the same as the expected payoff when reporting $\ell_i - e$.

\item Forgiveness. Except for the requirement to keep error-monotonicity and error-symmetry, ``punishment" in response to error is kept to a minimum.

\end{enumerate}

The above properties provide incentives to a player to report his best estimate for the length of his job, not to bias his estimate in either direction (as one might do in shortest first schedule), and to actually participate (trusting that even if his report is not accurate, reasonable payoff will remain). The question of whether given these incentives the player will actually provide an honest report is a psychological question, and not a mathematical question (since we do not assume a quantitative model), and hence beyond the scope of this paper.

As previously explained, maximizing welfare and achieving fairness can be incompatible. Hence we present two mechanisms for {\em deterministic length / qualitative private inputs} that offer different types of guarantees.

\begin{thm}
\label{thm:qualitative}
There is a randomized mechanism (without preemption) satisfying error-monotonicity, error-symmetry and forgiveness, that in addition is $\Omega(1)$-fair if all reports happen to be correct.
\end{thm}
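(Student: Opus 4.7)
My plan is to construct the mechanism as a composition of a per-player symmetrization gadget (which already satisfies the three qualitative axioms by construction) with a scheduling rule that does not destroy per-player symmetry, and finally to use an $r_{-i}$-dependent scale choice to match the fair share under honest reporting.

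For the gadget, fix a half-width $K>0$ that does not depend on player $i$'s own report. For each $i$ independently draw $U_i$ uniformly from $[-K,K]$, allocate a tentative slot of length $T_i=r_i+U_i$, run the job for $T_i$ time, and -- if the job completes at some $c_i\le T_i$ -- accept the win with probability $p(c_i,r_i)=\max\!\bigl(0,(K+c_i-r_i)/(2K)\bigr)$. Since $\Pr[T_i\ge \ell_i]=(K+r_i-\ell_i)/(2K)$ and $p(\ell_i,r_i)=(K-(r_i-\ell_i))/(2K)$ for $|r_i-\ell_i|\le K$, a one-line computation gives that, conditional on the slot being run, the win probability equals $(K^2-(r_i-\ell_i)^2)/(4K^2)$: a symmetric unimodal function of $r_i-\ell_i$ peaking at $1/4$ at honest reporting. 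Moreover $p$ is precisely the smallest downward correction to $\Pr[T_i\ge\ell_i]$ that equalizes the over- and under-reporting cases, which is what forgiveness demands.

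For scheduling, to lift the three axioms from the per-slot level to the global expected payoff I need the probability $q_i$ that player $i$'s slot actually executes to depend only on the other players' data. I would achieve this by picking a uniformly random permutation of the players, greedily reserving in that order a block of width exactly $2K$ on the timeline for each player (skipping those for whom a $2K$-block no longer fits before the deadline), and executing the gadget inside each reserved block. Because the reservation length is $2K$ independently of $r_i$, the event ``$i$ is scheduled'' depends only on $r_{-i}$ and on the permutation, so player $i$'s expected payoff factors as $q_i(r_{-i})\cdot (K^2-(r_i-\ell_i)^2)/(4K^2)$ and retains the symmetric, monotone, forgiving shape.

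To achieve $\Omega(1)$-fairness under honest reporting I would run the construction above in parallel on $O(\log n)$ disjoint bands of the timeline, using widths $K^{(t)}=2^t D/n$ for $t=0,1,\ldots,\lceil\log n\rceil$, and route each player $i$ to the band whose width best matches the other players' reports closest to $i$'s own scale, using a statistic of $r_{-i}$ alone. Under honest reporting the appropriate band $K^{(t)}\approx \ell_i$ is chosen, so a reservation of width $2K^{(t)}\approx 2\ell_i$ accommodates $\Omega(D/\ell_i)$ slots in that band, giving scheduling probability $\Omega(D/(n\ell_i))$ and -- combined with the per-slot win factor $1/4$ -- a success probability $\Omega(D/(n\ell_i))$, matching the fair share of Definition~\ref{def:fairshare} up to a constant. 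The main difficulty throughout is reconciling the axioms -- which forbid the mechanism's behavior on $i$ from depending on $r_i$ except through the symmetric factor -- with fairness, which naturally wants per-player parameters tuned to $\ell_i$; the $r_{-i}$-based band assignment is the key idea that threads this needle without breaking per-player symmetry.
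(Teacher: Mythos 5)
Your per-player gadget has a gap that is fatal in this model: the step ``if the job completes at some $c_i\le T_i$, accept the win with probability $p(c_i,r_i)$'' is not an operation the scheduler has. In this setting a player's utility is~1 as soon as his job finishes by the deadline -- completion is an objective event and there is no money -- so a realized success cannot be retroactively ``rejected.'' Nor can the acceptance factor be simulated by a smarter abort rule: for a fixed report $r_i$, any (randomized) abort policy yields a success probability of the form $\Pr[\mbox{abort time}\ge\ell_i]$, which is non-increasing in the true length $\ell_i$, whereas your target $\bigl(K^2-(r_i-\ell_i)^2\bigr)/(4K^2)$ is \emph{increasing} in $\ell_i$ on the range $\ell_i<r_i$. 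So the parabola is unimplementable, and without the acceptance step the gadget's payoff is just $\Pr[T_i\ge\ell_i]=(K+r_i-\ell_i)/(2K)$, which is maximized by over-reporting -- error-monotonicity and error-symmetry both fail. The paper threads exactly this needle by shaping the payoff using only start/abort decisions: a player reporting $t_i$ is started with probability $u/t_i$, and if unfinished at $t_i$ is granted one more step with probability $t_i/(t_i+2)$, then $(t_i+2)/(t_i+4)$, etc.; the telescoping product gives success probability $u/(\ell_i+|r_i-\ell_i|)$, which is symmetric and monotone in the error, attains the full fair share $u/\ell_i$ at an honest report, and -- crucially -- is non-increasing in $\ell_i$ for fixed $r_i$, which is why it is realizable.

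The second gap is the band assignment, which you yourself flag as the key idea: it cannot simultaneously be a statistic of $r_{-i}$ alone and ``match $i$'s own scale.'' If the band depends only on $r_{-i}$, then when the other players' jobs live at a different scale (perfectly consistent with all reports being correct), player $i$ is routed to a band with $K^{(t)}\ll\ell_i$ and his job never fits, so he gets probability~0 against a positive fair share; if instead the band depends on $r_i$, then $q_i$ jumps as $r_i$ crosses band boundaries and the factorization $q_i(r_{-i})\cdot(\mbox{parabola})$ -- hence error-symmetry -- is lost. Moreover, splitting the timeline into $O(\log n)$ disjoint bands with obliviously fixed budgets costs a $\Theta(\log n)$ factor (e.g., all $n$ players at scale $D/n$ share one band), degrading the guarantee to roughly the $O(1/\log n)$ level that Theorem~\ref{thm:oblivious} attributes to oblivious schedulers, rather than the claimed $\Omega(1)$-fairness. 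The paper avoids any global scale parameter: the fair-share lottery $u/t_i$ adapts to the report itself, so no multi-band partition (and no attendant loss) is needed. A smaller point: even at an honest report your gadget caps the conditional success probability at $1/4$ and zeroes it out entirely for errors beyond $K$, which sits uneasily with the forgiveness axiom, whereas the paper's $u/(\ell_i+e)$ decays gracefully and never punishes beyond what symmetry and monotonicity force.
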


\begin{thm}
\label{thm:preemption}
There is a deterministic mechanism with preemption satisfying error-monotonicity, error-symmetry and forgiveness, that in addition achieves maximum welfare if all reports happen to be correct.
\end{thm}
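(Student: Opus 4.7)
The plan is to exhibit a two-phase mechanism $M$ that uses shortest-first scheduling in a first phase (to secure maximum welfare under honest reports) and uses preemption-based resumption in a second phase (to recover symmetrically from small under-reports). Given reports $r_1,\ldots,r_n$, sort players by $r_i$ in increasing order, breaking ties by a fixed deterministic rule. In Phase~1, run the jobs adaptively in sorted order: job $J_i$ is granted a slot of length $r_i$, and if it does not complete within its slot it is aborted but remembered; if it finishes early, the scheduler moves on immediately. In Phase~2, whatever time remains before the deadline $D$ is used, with preemption, to resume the aborted jobs, in an order keyed to the sorted $r$-values. When every $r_i=\ell_i$, each job finishes exactly at the end of its slot, no aborts occur, Phase~2 is vacuous, and $M$ reduces to the shortest-first schedule, which achieves the maximum number of completed jobs in $D$ units. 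Hence maximum welfare under honest reports is immediate.

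For error-monotonicity, fix the reports and true lengths of all other players and treat $P_i$'s payoff as a function of $r_i$. An over-report by $e$ pushes $P_i$ later in the sort (possibly past time $D$), while an under-report by $e$ aborts $J_i$ with an $e$-unit deficit that must be covered in Phase~2; both directions degrade the payoff monotonically in $|e|$, with $r_i=\ell_i$ at the peak. For forgiveness, the only ``punishment'' ever applied is aborting a job when its own reported slot is exhausted, and this is needed in order to respect over-reporters' positions in the sort; the Phase~2 resumption salvages any aborted job whenever enough slack is available, and no further penalties are introduced.

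The main obstacle is error-symmetry. The subtlety is that when $P_i$ switches from report $\ell_i+e$ to report $\ell_i-e$, his position in the sort shifts across the ``swap set'' $S=\{j\ne i : r_j\in(\ell_i-e,\ell_i+e]\}$, so that the jobs in $S$ move from being scheduled after $P_i$ (in the over-report case) to being scheduled before $P_i$ (in the under-report case). To handle this, I would prove the following accounting identity: the total machine time actually devoted to $J_i$ in the under-report case (its short Phase~1 slot plus its Phase~2 resumption) equals the total machine time devoted to $J_i$ in the over-report case (just its Phase~1 slot), once the $e$ units of spare slot-time donated by over-reporting are matched against the $e$ units of Phase~2 recovery required by under-reporting. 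Combined with the fact that all other players behave identically in the two scenarios, this identity implies that $J_i$ completes in the under-report case iff it completes in the over-report case, giving error-symmetry. The bulk of the technical work lies in formalising this matching, in particular tracking how the jobs in $S$ interact with the Phase~2 resumption via preemption; everything else in the proof is routine case analysis.
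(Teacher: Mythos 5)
There is a genuine gap, and it sits exactly where you located the ``bulk of the technical work'': the accounting identity you propose is false for your two-phase mechanism, and no choice of Phase-2 order keyed to the reports repairs it. The problem is that in the under-report case the $e$ missing steps are recovered only \emph{after} the Phase-1 slots of \emph{all} other jobs -- including jobs whose reports exceed $\ell_i+e$, which in the over-report case are sorted after $J_i$ and cannot delay it at all -- and possibly after Phase-2 time consumed by other aborted jobs. Concretely, take $D=10$, one other (honest) player with $r_k=\ell_k=7$, and $\ell_i=5$, $e=1$. Reporting $\ell_i+e=6$ puts $J_i$ first in the sort; it runs $5$ steps and finishes (payoff $1$). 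Reporting $\ell_i-e=4$ also puts $J_i$ first, but it is aborted after $4$ steps with a deficit of $1$; then $J_k$ consumes the remaining $6$ steps without finishing, Phase 2 never starts, and $J_i$ fails (payoff $0$). So error-symmetry is violated (the payoff at $+e$ is $1$ and at $-e$ is $0$), and a second example with two aborted jobs competing in Phase 2 (e.g.\ a long job with a smaller report hogging the leftover time) shows that neither ascending nor descending $r$-order in Phase 2 can restore it. The deeper issue is structural: deferring the recovery of an under-reporter to a terminal phase makes his fate depend on the total Phase-1 workload of everyone, whereas an over-reporter's fate depends only on the jobs sorted ahead of him, and these two quantities cannot be matched by any fixed reordering.

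The paper's mechanism avoids this by never separating the schedule into phases. It keeps for each job a \emph{virtual length} $v_i$, initialized to $r_i$, always runs one step of the job with the currently smallest $v_i$, and each time a job runs a step beyond its report without finishing, its $v_i$ is increased by $2$ and the job is reinserted into the sorted order. The factor $2$ is chosen so that at the moment $J_i$ completes, $v_i = t_i + |r_i - t_i|$ in both the over- and under-report cases, and because steps are interleaved at unit granularity in nondecreasing virtual-length order, the amount of work other jobs complete before $J_i$ finishes is a function of this single quantity (given the others' reports and true lengths). Monotonicity and symmetry then hold simultaneously, forgiveness is built in (no job is ever hard-aborted, only demoted), and with correct reports the run coincides with shortest-first, giving maximum welfare. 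This single-queue, per-step demotion idea is the ingredient your proposal is missing; your welfare claim under honest reports is fine, but the symmetry argument needs to be rebuilt around it rather than around a two-phase decomposition.
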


\subsection{Related work}

Our work is broadly related to work in algorithmic mechanism design \cite{NR2001}, and in particular to a more recent line of research on approximate mechanism design without money \cite{PT09}. It tries to quantify loss of efficiency in strategic settings, in the spirit of work on the price of anarchy \cite{KP99}, though the source of loss in efficiency in our case is a combination of two aspects -- selfishness and uncertainty about private inputs. We are aware of some previous work \cite{PRST08} dealing with uncertain private inputs. However our work differs considerably from this previous work in that our mechanisms do not involve money, and we introduce new performance measures ({\em fair-share}) that are less sensitive to uncertainties. In our work we use linguistic terms such as {\em fairness} and {\em forgiveness}. Similar terms have been used in other contexts (e.g., in~\cite{BramsTaylor}), but often not with the same interpretation given in our work. More detailed discussion of related work is left to the appendix.

\section{Proofs -- Random length / quantitative private input}

Before proving Theorem~\ref{thm:neardeterministic}, it is instructive to consider some simpler settings. We start with the oblivious setting, in which players are not asked to report their private inputs.

The trivial oblivious schedule is as follows. Sort jobs in a random order. Run them one by one in this order, switching to a new job whenever the previous job finishes.

The performance guarantee of this schedule is rather poor. Consider for example the case in which half the jobs require time $2D/n$ each, whereas the other half require time $D$ each. The optimal schedule can schedule $n/2$ jobs, whereas the trivial oblivious finishes less than two jobs in expectation.

To remedy this situation, it is natural to introduce timeouts. If a job does not finish within the timeout period, it is aborted and the schedule moves to the next job. A timeout value of $D/\sqrt{n}$ ensures that at least $\sqrt{n}$ jobs have an opportunity to run, or equivalently, that each job has probability at least $1/\sqrt{n}$ of being run. In any schedule, at most $\sqrt{n}$ jobs run for more that $D/\sqrt{n}$ steps. This implies that the expected number of jobs completed by the timeout schedule is at least $(\opt - \sqrt{n})/\sqrt{n} = \opt/\sqrt{n} - 1$.

This form of a guarantee is essentially best possible for oblivious algorithms. Consider the case that $\sqrt{n}/2$ jobs each take time $2D/\sqrt{n}$, and all other jobs take time $D$. Any schedule can run at most $\sqrt{n}/2$ jobs for time $2D/\sqrt{n}$ each, and if the schedule is oblivious, the expected number of these jobs that finish is $\sqrt{n}/2 \cdot \sqrt{n}/2n = 1/4$, which can be viewed as $\opt/\sqrt{n} - 1/4$.

The above line of reasoning can be extended to the strategic setting. The proof of the following proposition illustrates one of the principles that will later be used in the proof of Theorem~\ref{thm:neardeterministic}.

\begin{proposition}
\label{prop:nooblivious}
In the strategic setting (with random length and quantitative private input), no scheduler can guarantee a welfare of more than $O(\opt/\sqrt{n} + 1)$.
\end{proposition}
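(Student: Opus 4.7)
The plan is to exhibit a single instance on which every strategic scheduler fails, by making all players' private inputs identical so that their reports carry no information useful for distinguishing jobs. Fix the deadline $D$ and let each of the $n$ players have the same distribution $f$ with $f(D/\sqrt{n}) = 1/\sqrt{n}$ and $f(D) = 1 - 1/\sqrt{n}$. A standard concentration calculation shows that in expectation $\sqrt{n}$ jobs are ``short'' (of length $D/\sqrt{n}$), and since $\sqrt{n}\cdot(D/\sqrt{n})=D$ they all fit within the deadline, so the offline optimum achieves $\opt=\Theta(\sqrt{n})$ on this instance.

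Given any strategic scheduler $M$, symmetrize it by first sampling a uniformly random permutation of the $n$ players and relabeling accordingly before running $M$. The welfare is unchanged (the instance is itself symmetric), and any equilibrium strategy profile can be averaged over permutations to yield a symmetric equilibrium in which the joint distribution of the realized $\ell_i$ and of the scheduler's allocations is exchangeable in the $n$ player indices. This reduces the analysis to a setting that behaves, on this particular instance, essentially like an oblivious scheduler.

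Let $T_i$ denote the total machine time the scheduler allocates to job $i$, summed across preemption slots. Since $\sum_i T_i\le D$, at most one player can have $T_i=D$, which bounds the expected number of ``long'' completions (where $\ell_i=D$) by $1$. For short completions, enumerate the jobs in the order in which the scheduler's cumulative allocation to them first crosses $D/\sqrt{n}$, and let $J_k$ be the $k$-th such job. The identity of $J_k$ is a function of the scheduler's observations up to that moment, which reveal $\ell_{J_1},\ldots,\ell_{J_{k-1}}$ but not $\ell_{J_k}$: job $J_k$ has so far been run for strictly less than $D/\sqrt{n}$ time, insufficient to distinguish short from long, and the symmetric equilibrium rules out any informational leak through the reports. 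Hence $P(\ell_{J_k}=D/\sqrt{n})=1/\sqrt{n}$ independently of $k$, and since at most $\sqrt{n}$ values of $k$ can occur (the total committed time $\ge k\cdot D/\sqrt{n}$ cannot exceed $D$), the expected number of short completions is at most $\sqrt{n}\cdot(1/\sqrt{n})=1$. Combining the two contributions gives expected welfare at most $2$ on this instance, which is $O(\opt/\sqrt{n})$; the additive $1$ in the statement absorbs instances with very small $\opt$.

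The main obstacle is precisely the independence claim in the previous paragraph, and this is what the symmetrization step is designed to secure: in an asymmetric equilibrium the scheduler could conceivably learn something about $\ell_{J_k}$ indirectly through its choice of which job to run next, so the argument would no longer reduce cleanly to the oblivious-style counting. This ``identical-$f$'' technique, of neutralizing strategic information by making the instance symmetric, is the principle that will also be used (with a different choice of $f$ and a more delicate parameter balancing) in the proof of Theorem~\ref{thm:neardeterministic}.
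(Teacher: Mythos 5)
Your construction does not prove the proposition as the paper intends it, because making all private inputs identical removes the strategic content entirely. On your instance the $O(1)$ upper bound you derive applies verbatim to \emph{every} scheduler, including a non-strategic one that is handed all the true distributions (indeed the canonical scheduler of Proposition~\ref{pro:fairshare} completes only about one job in expectation there). The gap you exhibit is therefore the gap between the realization-clairvoyant optimum and what any policy can achieve under uncertainty, not a limitation of strategic schedulers. In this paper $\opt$ is the welfare of the \emph{optimal schedule}, i.e.\ a scheduling policy that knows the distributions and observes completions but not the realized lengths (this is exactly how ``optimal schedule'' is used and analyzed in Proposition~\ref{pro:fairshare}); on your identical-$f$ instance that benchmark is itself $O(1)$, so ``expected welfare $\le 2$'' contradicts no guarantee at all. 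Under the clairvoyant reading that would rescue your instance, the statement becomes one that holds even with honest reports and full knowledge of all $f_i$, which makes the qualifier ``in the strategic setting'' vacuous and is not what Proposition~\ref{prop:nooblivious} is meant to establish.

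The paper's proof keeps two distinguishable populations precisely so that the loss is attributable to incentives: $n/t$ jobs with $f$ uniform on $[1,2t]$ and the rest uniform on $[1,n]$ (so a scheduler trusting honest reports would complete $\Theta(\sqrt{n})$ jobs), and then a comparison of two neighboring scenarios differing in a single job. Since a long job could impersonate a short one, truthfulness forces the expected time granted to a short report to be at most the expected time a long report receives, which is less than $2$ by averaging over the many long jobs; hence each short job finishes with probability $O(1/t)$ and the welfare is $O(1)$ while $\opt = \Theta(\sqrt{n})$ for $t=\sqrt{n}$. This impersonation-across-neighboring-instances argument is the ``principle later used in the proof of Theorem~\ref{thm:neardeterministic}''; your identical-$f$ symmetrization cannot play that role, since for near-deterministic jobs with identical distributions every scheduler is within a constant factor of optimal, so no $2^{\Omega(\sqrt{\log n})}$ gap can arise that way. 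Secondary issues --- the reported $f$ is not a consistent distribution (if lengths are at most $D$ one needs $f(D)=1$), and the symmetrization step is both informal and unnecessary (with identical private inputs and unobserved realizations, reports are automatically independent of the true lengths) --- are minor compared to this benchmark problem.
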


\begin{proof}
Fix an arbitrary (randomized) strategic scheduling mechanism and assume for simplicity of notation that $D = n$. The revelation principle implies that under standard solution concepts (e.g., dominant strategies, Nash equilibrium), we may assume that players simply report their true private inputs (and the mechanism itself performs on behalf of the players whatever strategic decisions are best for them). Consider two scenarios. In one there are $n/t$ short jobs whose $f_i$ is distributed uniformly between 1 and $2t$, and $n - n/t$ long jobs whose $f_i$ is distributed uniformly between 1 and $n$. In the other there are $n/t + 1$ short jobs as above and $n - n/t - 1$ long jobs. Let $E_s$ be the expected amount of time that scheduler gives to a short job in the second scenario, and $E_l$ be the expected amount of time given to a long job in the first scenario. If $E_s > E_l$, then if all jobs report truthfully in the first scenario, there is incentive for long jobs to report instead that they are short. Hence for truthful reporting to be a Nash equilibrium, we must have that $E_s < E_l$. But by symmetry, $E_l \le n/(n - n/t - 1) < 2$ implying that $E_s < 2$. Hence in the first scenario, the expected number of short jobs that finish is at most $2/2t \cdot n/t$. When $t = \sqrt{n}$, the guarantee of a strategic scheduler is no better than $O(\opt/\sqrt{n} + 1)$.
\end{proof}

We now prove Theorem~\ref{thm:neardeterministic}, which unlike Proposition~\ref{prop:nooblivious} requires that jobs are near-deterministic.

\begin{proof} {\bf [Of Theorem~\ref{thm:neardeterministic}.]}
Let $D$, the total time available, be a power of~2, and let $d = 1 + \log D$. Let $k$ be a parameter that we choose as $k = 2D$, and let $\epsilon$ be small (we set $\epsilon = 2/k = 1/D$). Consider a collection $C$ of jobs that are arranged in groups $J_i$, for $0 \le i \le d$. Group $J_0$ has $D$ jobs, each of length~1. For $i \ge 1$, group $J_i$ has $Dk^i$ jobs with a small uncertainty about their length: each job has length $2^{i-1}$ with probability $\epsilon$ and length $2^i$ with probability $1 - \epsilon$. The optimal schedule runs all jobs of group $J_0$ for a payoff of $D$. 

Consider now an arbitrary (possibly randomized) scheduler for the strategic setting. With respect to this scheduler, and assuming that all players are truthful, we consider the following random variables. For $0 \le i \le d$, let $\ell_i$ be the expected number of jobs of group $J_i$ that get running time $2^{i-1}$ and let $u_i$ be the expected number of jobs of group $J_i$ that get running time $2^i$. For $J_0$ we fix $\ell_0 = 0$, and clearly, for $J_d$ we have $u_d = 0$ (because $2^d > D$).

Observe that all $\ell_i$ combined contribute only one to the expected number of jobs that finish, as $\sum \ell_i \le D$ and the probability that such a job finishes is $\epsilon = 1/D$.

Suppose that for every $i$ we have that $u_i \le 2^{-i}$. In this case all $u_i$ combined contribute only two to the expected number of jobs that finish, as $\sum u_i \le \sum_{i=0}^{d} 2^{-i} < 2$.

Hence to complete in expectation more than three jobs it is necessary that $u_i > 2^{-i}$ for some $i < d$ (recall that $u_d = 0$). We show that no such $i$ exists if the players have dominant strategies.

Assume for the sake of contradiction that $u_i > 2^{-i}$, and consider a collection $C'$ of jobs that differs from the original collection in the sense that there is one fewer job in $J_i$ and one more job in $J_{i+1}$. We claim that if all players are truthful, then the respective $\ell'$ and $u'$ random variables for $C'$ are related to the original $u_i$ via the inequality $\epsilon \ell'_{i+1} + u'_{i+1} \ge k\epsilon u_i$. The left hand side (divided by $1 + Dk^{i+1}$) represents the probability for a job in $J'_{i+1}$ that it finishes in $C'$, whereas the right hand side (divided by $Dk^{i+1}$) represents the probability for a job in $J'_{i+1}$ to finishes if it impersonates as a job of $J_i$ (and hence the scheduler behaves as if the collection is $C$). Any dominant strategy for the player in $C'$ must give it at least the same probability of finishing as the strategy of impersonating a player from a different group.

Consider the implications of the inequality $\epsilon \ell'_{i+1} + u'_{i+1} \ge k\epsilon u_i$. Observe that $\ell'_{i+1} \le D2^{-i}$, since the total time is $D$. Using $k = 2D$ it follows that $\epsilon \ell'_{i+1} \le k\epsilon 2^{-(i+1)} \le k\epsilon u_i/2$. It follows that $u'_{i+1} > k\epsilon u_i/2 = u_i = 2^{-i} > 2^{-(i+1)}$. Now we can consider yet another collection of jobs that differs from $C'$ in that it has one less job in $J_{i+1}$ and one more job in $J_{i+2}$. Iterating and repeating the argument as above we eventually get to a collection of jobs for which $u_d > 0$, which is a contradiction.

By choosing $D \simeq 2^{\sqrt{\log n} - 1}$ we have $d = \sqrt{\log n}$, and the total number of jobs is roughly $n$ (ignoring low order multiplicative terms). In this case every strategic mechanism is restricted to expected payoff of~3 whereas the optimum is  $2^{\sqrt{\log n} - 1}$.
\end{proof}

The bounds in Theorem~\ref{thm:neardeterministic} are essentially tight.

\begin{thm}
\label{thm:goodneardeterministic}
There is a strategic scheduler with dominant strategies that for every collection of near deterministic jobs approximates the maximum welfare within a ratio of $2^{(O{\sqrt{\log n}})}$.
\end{thm}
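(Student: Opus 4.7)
The plan is to design a randomized mechanism that partitions jobs into $d = \Theta(\sqrt{\log n})$ length super-classes, each spanning a factor of $2^{\sqrt{\log n}}$ in length, and schedules a random subset within each super-class. Setting $s = \lceil\sqrt{\log n}\rceil$, super-class $S_k$ contains jobs whose reported maximum length $M_i$ (the largest $t$ with $f_i(t^-)<1$) lies in $[2^{ks}, 2^{(k+1)s})$; there are $d = O(\sqrt{\log n})$ super-classes since $M_i \le D \le \mathrm{poly}(n)$. The scheduler allocates time $D/d$ to each $S_k$ and within it selects $\lfloor D/(d\cdot 2^{(k+1)s})\rfloor$ jobs from a normalized pool of size $\max(|S_k|, N_k)$, running each for exactly $2^{(k+1)s}$ time units, where $N_k$ is a threshold used to enforce incentive compatibility.

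For dominant-strategy truthfulness I would verify both directions. The factor-$2$ near-deterministic property implies $m_i \ge M_i/2$, so under-reporting by two or more super-classes (to $S_{k'}$ with $k' \le k-2$) allocates time $2^{(k'+1)s} < m_i$ once $s \ge 2$, guaranteeing no completion; the boundary case $k'=k-1$ requires a direct argument exploiting that the probability $p_i$ of the shorter length value is typically small and bounded away from the truthful selection probability. Over-reporting to any higher super-class reduces both the number of slots (by a factor $2^s$ per step up) and the per-player selection probability $\lfloor D/(d\cdot 2^{(k'+1)s})\rfloor/\max(|S_{k'}|+1, N_{k'})$ below the truthful value $\lfloor D/(d\cdot 2^{(k+1)s})\rfloor/\max(|S_k|, N_k)$, provided $N_k$ is chosen so that this quantity is non-increasing in $k$ regardless of other reports.

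For the approximation analysis I would compare super-class by super-class. The optimum completes at most $\min(|S_k|, D/2^{ks})$ jobs from $S_k$, while the mechanism completes in expectation $\Omega(\min(|S_k|, D/(d \cdot 2^{(k+1)s})))$ when $|S_k| \ge N_k$, yielding a per-super-class ratio $\Omega(1/(d \cdot 2^s)) = 2^{-\Theta(\sqrt{\log n})}$; when $|S_k| < N_k$ the OPT contribution from $S_k$ is itself small enough (by the choice of $N_k$) not to spoil the bound. Summing over the $d$ super-classes yields total expected welfare $\opt/2^{O(\sqrt{\log n})}$, matching the lower bound of Theorem~\ref{thm:neardeterministic}. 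The chief obstacle will be jointly calibrating the thresholds $N_k$ so that the incentive inequality above holds for all pairs $k' > k$ and all possible reports $|S_{k'}|$, while simultaneously preserving the per-super-class approximation when the real population is large; the factor-$2^s$ geometric gap between successive super-classes is precisely what makes both constraints jointly feasible, since the time allocation per selected job shrinks by $2^s$ across consecutive super-classes, absorbing the one-unit change in population the deviator causes.
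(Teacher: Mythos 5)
Your proposal has a genuine gap in the incentive-compatibility step, and it is not a detail that calibration of the thresholds $N_k$ can repair. Because in your mechanism the slot length a job receives depends on its own report (its super-class), a job whose upper length $M_i$ sits near the bottom of class $S_k$ gains from under-reporting by exactly one class: the slot $2^{ks}$ offered in $S_{k-1}$ already suffices to finish with probability $1$ when $M_i=2^{ks}$, and with probability $p_i$ whenever $m_i\le 2^{ks}<M_i$ --- and nothing in the near-deterministic model bounds $p_i$ away from $1$ (it only constrains the two support points), so your hope that ``$p_i$ is typically small'' is unfounded. Dominant strategies must hold for every profile of other reports, so in the worst case (all other players in $S_k$, none in $S_{k-1}$) deterring this deviation forces $A_k/\max(n,N_k)\ge A_{k-1}/N_{k-1}$ with $A_k=\lfloor D/(d\cdot 2^{(k+1)s})\rfloor$, i.e.\ $N_{k-1}\gtrsim 2^{s}\max(n,N_k)$. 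Cascading this from the top class down gives $N_0\gtrsim 2^{(d-1)s}n\approx n^2$, so on an instance of $n$ unit-length jobs with $D=n$ your mechanism completes $O\bigl(1/(d\,2^{s})\bigr)$ jobs in expectation against $\opt=n$, i.e.\ the incentive constraints and the per-class approximation are jointly infeasible for this mechanism structure (the over-reporting direction alone is fine, e.g.\ with $N_k=n$; it is the one-class under-report that kills it). A secondary, fixable issue: $d=O(\sqrt{\log n})$ needs $\log D=O(\log n)$, which is not assumed; you would have to collapse all lengths below $D/n$ into the bottom class so that only the range $[D/n,D]$ matters.

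The paper avoids exactly this trap by decoupling the time granted from the player's own report: it picks a single threshold $t$, runs every job reporting $t_i\le t$ in random order for up to $t$ steps regardless of the precise report, and gives ineligible jobs a fixed small lottery; truthfulness then rests on all eligible reports being treated identically, not on comparing selection probabilities across report-dependent classes. The approximation comes from a doubling argument rather than per-class accounting: starting from twice the average $\opt$ job length, the threshold is doubled while the ``dangerous'' jobs (eligible with $t_i\le t\le 2t_i$, which may be aborted) outnumber the good ones by more than $2^{\sqrt{\log n}}$, which can happen at most $\sqrt{\log n}$ times before exceeding $n$ jobs, yielding the $2^{O(\sqrt{\log n})}$ ratio. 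If you want to salvage your class-based design you would need some mechanism by which a one-class under-report never increases the completion probability --- e.g.\ the paper's trick of making eligibility the only report-dependent quantity --- rather than a choice of $N_k$.
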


The proof of Theorem~\ref{thm:goodneardeterministic} appears in the appendix. Recall that we have seen that oblivious schedulers cannot achieve guarantees as in Theorem~\ref{thm:goodneardeterministic} (not even for deterministic jobs). Hence the strategic scheduler in the proof of Theorem~\ref{thm:goodneardeterministic} cannot be oblivious.

We now prove Proposition~\ref{pro:fairshare}.

\begin{proof}
Observe that a probability function $f(t)$ is defined only for $0 \le t \le D$, is non decreasing and satisfies $0 \le f(t) \le 1$. We may assume that $f(0) < 1$  and $f(D) > 0$, as otherwise the proposition is trivial. Given $f$, let $g$ be the unique function that minimizes $g(D/n)$ under the following constraints:

\begin{enumerate}

 \item $g$ is a probability function as defined above.

 \item $g(t) \ge f(t)$ for all $0 \le t \le D$.

 \item $g(t) > 0$ for all $t > 0$.

 \item $g$ is continuous and piecewise linear with at most two pieces, the first with positive slope the second with slope~0. If both pieces exist, we call their meeting point $t_m$.

\item $g(t) = f(t)$ for some $D/n \le t \le D$ in the first piece. We call the least value of $t$ at which this holds $p$.

\end{enumerate}

Constraint~(2) implies that the welfare of the optimal schedule with respect to $g$ is at least as large as that for $f$.

We now describe the optimal schedule (with preemption) for $g$. Run the jobs in two phases. In the first phase, every job is run for an infinitesimal amount of time. In the second phase, those jobs that have not finished are run in an arbitrary order until time runs out. The first phase produces an expected welfare of $ng(0)$. Assuming without loss of generality that $g(D) = 1$ (if $g(D) < 1$ the bounds in our analysis only improve), the maximum number of time steps that a job is run in the second phase (if run at all) is precisely $\frac{1 - g(0)}{g(D/n) - g(0)} \frac{D}{n}$, and the expected number of time steps is thus $\frac{1}{2}\frac{1 - g(0)}{g(D/n) - g(0)} \frac{D}{n}$ (by linearity of $g$ in this range). If this last expression is smaller than $D/n$, then it is not hard to see that the canonical scheduler recovers at least half the welfare by setting $t = D/n$. Hence we shall assume that it is at least $D/n$.
Then the expected number of jobs finishing in the second phase is at most $2n\frac{g(D/n) - g(0)}{1 - g(0)}$ (at most because it is bounded by $n - ng(0)$), and in both phases combined it is at most $ng(0) + 2n\frac{g(D/n) - g(0)}{1 - g(0)}$.

For the canonical scheduler for $f$, let us consider two possible values for $t \ge D/n$. One is $t = D/n$, and then the canonical scheduler will recover expected welfare at least $nf(D/n) \ge d(g(0))$ The other is $t = p$ (from constraint~(5)) and then the condition $f(p) = g(p)$ can be seen to imply that the expected welfare is at least $n\frac{g(D/n) - g(0)}{1 - g(0)}$. At least in one of these cases it is at least one third that of the optimal schedule for $g$. As the choice of $t$ in the canonical scheduler is at least as good, the proposition is proved.
\end{proof}

The gap of~3 between optimal and canonical scheduler in Proposition~\ref{pro:fairshare} is essentially tight. For some large $k$ consider a probability function $f(t) = 1/k$ in the range $0 \le t \le D/n$, and $f(t) = d/Dk$ in the range $D/n \le t \le kD/n$. The canonical scheduler is indifferent to a choice of $t$ in the range $D/n \le t \le kD/n$ and will have expected welfare $n/k$. A preemptive scheduler with two phases as in the proof of Proposition~\ref{pro:fairshare} achieves expected welfare of roughly $1/k + 2/(k+1)$. Let us remark that the canonical scheduler is {\em not} the optimal nonadaptive scheduler (a better nonadaptive schedule is to choose two thresholds $0 \le t_1 \le D/n \le t_2$ and give some jobs $t_1$ time steps and other jobs $t_2$ time steps), and it is not the goal of Proposition~\ref{pro:fairshare} to characterize the gap between nonadaptive schedules and preemptive ones.

We now turn to prove Theorem~\ref{thm:strategicfair} (an $1/2$-fair scheduler).

\begin{proof} {\bf [Of Theorem~\ref{thm:strategicfair}.]}
Our proof uses a principle that may be applicable also in other situations. The {\em fair-share scheduler} shares equally the $D$ time slots among the $n$ players, but only in a sense of expectations. Each player may choose an arbitrary probability distribution over running times whose expectation does not exceed his share. The basic idea is as follows. Given a private input $f_i$, the player chooses (or alternatively, the player reports $f_i$ to the mechanism and the mechanism chooses) a value $t_i$ that maximizes the product of $f_i(t_i)$ (which is his probability of finishing in $t_i$ steps) and $p_i = \min[1,(D/n)/ t_i]$ (which will be the probability that the mechanism actually allows the player to run). Thereafter the mechanism chooses a random set of players to run, such that the probability of each player to be chosen is his respective $p_i$, the amount of time given to run if chosen is $t_i$, and the total running time committed to is exactly $D$. It is not hard to see that a mechanism with the above properties is truthful and 1-fair. Unfortunately, no such mechanism actually exists, due to divisibility issues. For example, if all $t_i$ equal $D/2 + 1$ it is impossible for the scheduler to satisfy the constraint that the total time committed to is $D$, and no scheduler can be more that $(1/2 + O(1/D))$-fair. However, the above ideas can be modified so as to get a mechanism that is $1/2$-fair. Details appear in the appendix.
\end{proof}

We now prove Theorem~\ref{thm:nocomplete} (limitations of complete schedulers).

\begin{proof} {\bf [Item (1) of Theorem~\ref{thm:nocomplete}].}
In the proof it is assumed that running times $t$ are positive integers. There are two players, $P_1$ and $P_2$, and a deadline $D$. Fix an arbitrary (randomized) complete scheduler $S$. Consider two arbitrary reports, $r_1$ for $P_1$ and $r_2$ for $P_2$. For every $t \le D$, define $p(t|r_1,r_2)$ to be the probability that $S$ gives $P_1$ at least $t$ time steps, if the reports are $r_1$ and $r_2$ and the true lengths of the jobs are $\ell_1 = \ell_2 = D$. Given two reports $r_1$ and $r_1'$, we say that $r_1$ {\em wins} against $r_1'$ with respect to $r_2$ if for every $1 \le t \le D$ we have $p(t|r_1,r_2) \ge p(t|r_1',r_2)$.
We say that $r_1$ {\em suffix-wins} against $r_1'$ with respect to $r_2$ if there is some $t\le D$ for which $p(t|r_1,r_2) > p(t|r_1',r_2)$,  and $p(t'|r_1,r_2) = p(t'|r_1',r_2)$ for every $t' > t$.

Given $r_2$, suffix-winning induces an order relation on reports $r_1$ of $P_1$. This order has a maximal element (because probability functions over finitely many outcomes form a compact set). This maximal element is unique up to equivalence (all maximal elements $r_1$ have the same probability function $p(t|r_1,r_2)$).

\begin{proposition}
\label{pro:suffixwin}
Let $S$ be a complete scheduler with dominant strategies, and let the private input to $P_1$ be a probability distribution with full support (his job may have any running time with nonzero probability). Then a dominant report for player $P_1$ (if there is one) must be at the top of the suffix winning order (with respect to any $r_2$).
\end{proposition}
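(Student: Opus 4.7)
The plan is to expose, for each $k\in\{1,\ldots,D\}$, a concrete scenario in which the adversary's true length is $\ell_2=k$, and read off from dominance a constraint on the vector $p(\cdot\mid r_1,r_2)$. Together these constraints will force that vector to be lexicographically maximal from the top, which is exactly the definition of sitting at the top of the suffix-winning order with respect to the (arbitrary but fixed) $r_2$.

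First I would characterize $P_1$'s finishing probability under the scenario $\ell_2=k$. Viewing the adversarial scenario $\ell_1=\ell_2=D$, the scheduler's coin tosses determine a sequence in $\{1,2\}^D$ and in particular a random set $A\subseteq[D]$ of time-slots given to $P_1$, so that $p(t\mid r_1,r_2)=\Pr[|A|\ge t]$; in the real scenario with $\ell_2=k$ the scheduler's decisions agree with this adversarial trajectory up until a job actually finishes. I would split on $|A|$. If $|A|>D-k$ then $P_2$ is allocated only $|B|=D-|A|<k$ slots, cannot finish, and the scheduler follows the adversarial plan throughout, so $P_1$ receives exactly $|A|$ slots. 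If $|A|\le D-k$ then $P_2$ eventually finishes (at the wall-clock time $U_k$ of its $k$-th allocated slot, unless $P_1$ already finished first), and from $U_k$ onwards completeness forces every remaining slot onto $P_1$; the identity $|A\cap[1,U_k]|=U_k-k$ (only $k$ of the first $U_k$ slots belong to $P_2$) then yields a total of $(U_k-k)+(D-U_k)=D-k$ slots for $P_1$. Reconciling the sub-cases one obtains the clean conclusion: $P_1$ finishes surely when $\ell_1\le D-k$, and otherwise with probability exactly $p(\ell_1\mid r_1,r_2)$.

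Averaging over $\ell_1\sim f_1$ then gives $\Pr[P_1\mbox{ finishes}\mid\ell_2=k]=\Pr_{f_1}[\ell_1\le D-k]+\sum_{j=D-k+1}^{D}\Pr_{f_1}[\ell_1=j]\cdot p(j\mid r_1,r_2)$, whose first summand is independent of $r_1$. Dominance therefore forces $r_1$ to simultaneously maximize, over all possible reports, the linear functionals $L_k(r_1)=\sum_{j=D-k+1}^{D}\Pr_{f_1}[\ell_1=j]\cdot p(j\mid r_1,r_2)$ for every $k\in\{1,\ldots,D\}$. The full-support hypothesis makes every coefficient strictly positive, so I would induct from the top: maximizing $L_1$ pins $p(D\mid r_1,r_2)$ to its attainable maximum; given that pin, maximizing $L_2$ isolates a strictly positive coefficient on $p(D-1)$ and pins $p(D-1\mid r_1,r_2)$ to the best value compatible with the previous pin; and in general at step $k$, the strictly positive coefficient on $p(D-k+1)$ pins that coordinate subject to the higher-indexed coordinates already matching lex-maximal values. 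This chain of equalities is exactly the lex-maximality of $p(\cdot\mid r_1,r_2)$ from the top, which completes the proof.

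The main obstacle is the case analysis in the first step, in particular justifying that once $P_2$ finishes mid-execution, completeness forces the scheduler to devote every subsequent slot to $P_1$ regardless of whether the adversarial plan had assigned that slot to $P_1$ or to $P_2$; and then verifying that the sub-case in which $P_1$ happens to finish before $P_2$ does is consistent with the clean expression $p(\ell_1\mid r_1,r_2)$ for $\ell_1>D-k$. The combinatorial identity $|A\cap[1,U_k]|=U_k-k$ is the key technical ingredient, and once it is in place the remainder of the argument is routine bookkeeping followed by the lex-maximality induction.
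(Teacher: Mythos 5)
Your argument is correct and is essentially the paper's proof in a more explicit form: for the scenario $\ell_2=k$ you use completeness to make the payoff report-independent when $\ell_1\le D-k$, and the coupling to the adversarial trajectory ($\ell_1=\ell_2=D$) to express the finishing probability as $p(\ell_1\mid r_1,r_2)$ when $\ell_1>D-k$, with full support supplying strictness. The paper extracts exactly these facts only for the single scenario $k=D-t+1$ and phrases the conclusion as a one-step contradiction against a suffix-winning competitor, whereas you record the whole family of functionals $L_k$ and run a top-down lexicographic induction; the difference is organizational, not substantive.
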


\begin{proof}
Let $r_1'$ be a dominant report for player $P_1$, and assume for the sake of contradiction that there is some $r_1$ and $r_2$ such that $r_1$ suffix-wins against $r_1'$ with respect to $r_2$. Hence there is some $t\le D$ for which $p(t|r_1,r_2) > p(t|r_1',r_2)$,  and $p(t'|r_1,r_2) = p(t'|r_1',r_2)$ for every $t' > t$. Suppose that $P_2$ reports $r_2$ and the true length of his job is $\ell_2 = D - t + 1$. In this case $P_1$ get a running time of at least $t-1$ for certainty, and a larger running time with probability $p(t|r_1',r_2)$. But getting a larger running time with probability $p(t|r_1,r_2)$ would be strictly better, since it entails some gain if $\ell_1 = t$ and no loss otherwise. Hence $r_1'$ cannot be a dominant report.
\end{proof}

\begin{proposition}
\label{pro:win}
If $S$ is a complete scheduler with dominant strategies. Then for every $r_2$, there is some report $r_1$ that wins against all other reports $r_1'$.
\end{proposition}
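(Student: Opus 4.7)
The plan is to leverage Proposition~\ref{pro:suffixwin} together with an adversarial choice of $f_1$ that concentrates almost all of its probability mass at a single length. Fix $r_2$ and let $r_1^*$ be a maximal element of the suffix-winning order on reports of $P_1$ with respect to $r_2$; such an element exists by the compactness argument already invoked in the paragraph preceding Proposition~\ref{pro:suffixwin}. Recall from that paragraph that any two suffix-maximal reports induce the same probability function $p(\cdot|r_1,r_2)$. I claim that $r_1^*$ in fact wins against every other report $r_1'$, which is exactly the assertion of Proposition~\ref{pro:win}.

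Suppose for contradiction that $r_1^*$ does not win against some $r_1'$. Then there exists $t_0 \in \{1,\ldots,D\}$ with $\Delta := p(t_0|r_1',r_2) - p(t_0|r_1^*,r_2) > 0$. Define a full-support distribution $f_1$ for $P_1$'s length by placing probability mass $1-\epsilon$ on length $t_0$ and mass $\epsilon/(D-1)$ on every other integer in $\{1,\ldots,D\}$, where $\epsilon > 0$ will be chosen small. The expected payoff of $P_1$ under report $r_1$ (against a fixed $r_2$) equals $\sum_\ell \Pr[\ell_1 = \ell]\, p(\ell|r_1,r_2)$.

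Comparing the two reports under this $f_1$: the contribution from $\ell = t_0$ is $(1-\epsilon)\Delta$ in favor of $r_1'$, while the total contribution from lengths $\ell \neq t_0$ differs between the two reports by at most $\epsilon$ in absolute value, since each $p(\ell|\cdot,r_2)\in[0,1]$ and there are $D-1$ such terms each with weight $\epsilon/(D-1)$. Choosing any $\epsilon < \Delta/(1+\Delta)$ makes $r_1'$ strictly better than $r_1^*$ in expected payoff. By Proposition~\ref{pro:suffixwin}, however, every dominant report for $P_1$ under a full-support private input must lie at the top of the suffix-winning order, and therefore must share its probability function with $r_1^*$. Hence $r_1^*$ would itself be dominant for this $f_1$, contradicting the strict improvement we just exhibited; since $S$ is assumed to have dominant strategies, no such $r_1'$ can exist and $r_1^*$ wins against every other report.

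The main obstacle I anticipate is guaranteeing that the suffix-winning supremum is actually attained by a realizable report, rather than only being a limit of realizable probability functions. The excerpt handles this via compactness of the space of probability functions over finitely many outcomes, but if the set of realizable $p(\cdot|r_1,r_2)$ were not closed one could take $r_1^*$ whose induced function is within, say, $\Delta/3$ of the supremum at every coordinate and absorb the slack into the threshold on $\epsilon$; the rest of the argument is unchanged.
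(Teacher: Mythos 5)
Your overall strategy coincides with the paper's: take a suffix-maximal report $r_1^*$, suppose some $r_1'$ beats it at a coordinate $t_0$, build a full-support distribution concentrated at $t_0$, and invoke Proposition~\ref{pro:suffixwin} to conclude that the dominant report for that distribution induces the same function $p(\cdot|\cdot,r_2)$ as $r_1^*$, contradicting dominance. There is, however, a genuine gap in your central step: you assert that the expected payoff of $P_1$ under report $r_1$ ``against a fixed $r_2$'' equals $\sum_\ell \Pr[\ell_1=\ell]\, p(\ell|r_1,r_2)$. This is not well defined as stated, and taken literally it is false. Dominance quantifies over every circumstance, which here includes the true length $\ell_2$ of $P_2$'s job, and $p(\cdot|r_1,r_2)$ is by definition the distribution of time given to $P_1$ only in the scenario $\ell_1=\ell_2=D$. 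For an adaptive, complete scheduler the finishing probability of $P_1$ genuinely depends on $\ell_2$: for instance, if $\ell_2$ is small, completeness forces $P_1$ to finish whenever $\ell_1+\ell_2\le D$, irrespective of his report, so your identity fails in that circumstance. To contradict dominance you must exhibit a concrete circumstance (a value of $\ell_2$ together with the report $r_2$) in which $r_1'$ yields strictly higher expected payoff; this is exactly why the paper's proof fixes $\ell_2=D-t+1$ at the corresponding step.

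The repair is short. Fix the circumstance $\ell_2=D$. Then, by coupling the scheduler's coin tosses, on any run in which $P_1$ receives at least $\ell$ time steps neither job has been observed to finish before those steps are allocated ($P_2$ would need all $D$ steps, and $P_1$'s true length is $\ell$), so the scheduler behaves exactly as in the $\ell_1=\ell_2=D$ scenario; hence $\Pr[P_1 \mbox{ finishes}]=p(\ell|r_1,r_2)$ for every $\ell$ in the support, your formula becomes valid in this circumstance, and the computation with $\epsilon<\Delta/(1+\Delta)$ goes through. One further point of precision: Proposition~\ref{pro:suffixwin} does not make $r_1^*$ itself dominant; it says that the dominant report $r_1^d$ (which exists by hypothesis for your full-support $f_1$) satisfies $p(\cdot|r_1^d,r_2)=p(\cdot|r_1^*,r_2)$, which is all the contradiction needs, since in the circumstance above $r_1'$ then strictly beats $r_1^d$.
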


\begin{proof}
Fix $r_2$. Let $r_1$ be a report at the top of the suffix winning order, and assume for the sake of contradiction that there is a report $r_1'$ such that $p(t|r_1,r_2) < p(t|r_1',r_2)$ for some $t$. Let $\epsilon = (p(t|r_1',r_2) - p(t|r_1,r_2))/2$. Consider a private input for $P_1$ with full support, giving probability $1 - \epsilon$ of having length $t$. Being full support, Proposition~\ref{pro:suffixwin} implies that $r_1$ is a dominant report. However, if $\ell_2 = D - t + 1$ then reporting $r_1'$ is preferable to $r_1$.
\end{proof}

Observe that in Proposition~\ref{pro:win} we can switch the order of quantifiers. That is, there is a report $r_1$ that wins against all $r_1'$ with respect to all $r_2$. Otherwise, for private inputs with full support, there would not be any dominant report for $P_1$, since the optimal report would depend on $r_2$. Hence $P_1$ can safely report this winning $r_1$  whenever his private input has full support. To complete the proof of Theorem~\ref{thm:nocomplete}, it remains to show that reporting $r_1$ is optimal even if the private input to $P_1$ does not have full support. But this is not difficult to see (e.g., by taking the private input to be a limit of private inputs with full support).
\end{proof}

\begin{proof} {\bf [Items (2) and (3) of Theorem~\ref{thm:nocomplete}].}
Sort the players in a random (or arbitrary) order $P_1, \ldots P_n$. The complete algorithm proceeds in rounds.
Let $n_i = n-i+1$ be the number of players that remain prior to round $i$, and let $D_i$ be the number of time steps that remain. Initially $n_1 = n$ and $D_1 = D$.
We now describe round $i$.

Preliminary operation: If $n_i \ge D_i$, let $P_i$ run until $n_i - 1$ time steps remain. (If it finishes, move to the next round.)
After the preliminary operation we have $D_i \le n_i - 1$. This means that even if $P_i$ is aborted, there still are sufficiently many other players to use up all remaining time steps.

Now offer $P_i$ to select a lottery with expectation~1.  Namely, $P_i$ chooses a value $t \le D_i$, and is offered $t$ time steps with probability $1/t$. Perform this lottery. If $P_i$ wins, let it run for $t$ time steps (or until he finishes, whatever happens first). Thereafter (and also if he loses), abort $P_i$ and move to the next round.

Clearly, this algorithm is nontrivial and complete. Note that it requires $n > 2$. If $n=2$, then in the preliminary operation player $P_1$ is run until only one time step is left. But then offering him a lottery with expectation~1 is giving him also the last time step. So the algorithm becomes equivalent to the oblivious algorithm of running the two players in a random order.

To prove item (3) of Theorem~\ref{thm:nocomplete}, consider the case when $D=n^2$ and there are $n$ identical jobs, each with probability $1/2$ has length $n$ and with probability $1/2$ has length uniformly distributed between~1 and $n^2$. The fair-share of a player is roughly~1/2. However, for every complete scheduler, once it starts running a job, it cannot abort until the number of time steps that remain is equal to the number of remaining players (as all remaining jobs might have length~1 and completeness will be lost). It is not difficult to see that the expected number of jobs completed is $O(1)$, implying that the scheduler is only $O(1/n)$-fair.
\end{proof}

\section{Proofs -- deterministic length / qualitative private input}

Consider scheduling mechanisms such as the shortest first scheduler or the fair share scheduler. Mistakenly reporting a value $r_i$ that is larger than the true $\ell_i$ has the following effect: either the scheduler does not run the job at all and hence does not detect that the report was erroneous, or the scheduler does run the job for $r_i$ steps, the job finishes earlier, and no punishment is involved. On the other hand, reporting a value $r_i$ that is too low typically has severe effects: failing to finish in $r_i$ steps, the job is aborted. Our qualitative mechanisms are all based on the principle that the loss in payoff to a player due to reporting a value of $r_i$ that is too low should be equalized with the loss in payoff to a player due to reporting a value of $r_i$ that is too high. We illustrate this idea in the case of the fair share scheduler, and this serves as a sketch of proof for Theorem~\ref{thm:qualitative}.

\begin{proof} {\bf [Of Theorem~\ref{thm:qualitative}].}
Recall the basic idea for the fair share scheduler. Each player reports his private input $t_i$. The scheduler considers the players in a random order. Let $u$ denote the ratio between the number of time steps available and number of players remaining upon reaching a player $P_i$. The scheduler tosses a biased coin and with probability $u/t_i$ lets the player run his job for $t_i$ steps. If the job finishes we are done. But what should the scheduler do if the job does not finish? In this case the scheduler tosses another biased coin, with probability $2/(t_i+2)$ aborts and with probability $t_i/(t_i+2)$ lets the job continue for one more step. What is the logic behind this? Assume that the true length of the job in indeed $t_i+1$. Then the probability of finishing is then $(u/t_i) \cdot t_i/(t_i+2) = u/(t_i+2)$. This would have been the probability of finishing had the player reported $(t_i+1) + 1$ , and now it is also the probability of finishing for the report $(t_i+1) - 1$. Such a mechanism has the desired properties of error monotonicity, error symmetry and forgiveness. By the same logic, if the job still does not finish at time $t_i+1$, it is given another extra step with probability $(t_i+2)/(t_i+4)$ and so on. Further details of how these ideas are combined with the fair-share scheduler to prove Theorem~\ref{thm:qualitative} are quite easy and omitted.
\end{proof}

Using ideas of a similar nature in combination with the shortest first schedule proves Theorem~\ref{thm:preemption}. Further details appear in the appendix.

\begin{appendix}

\section{Related work}

In this section we survey some work that may be related to our work. We shall mention work on scheduling, on price of anarchy, on mechanisms without money, on uncertain private inputs, on fairness, and on forgiveness. Our survey is not meant to be exhaustive.

{\bf Scheduling.}
Following the paper by Nisan and Ronen on algorithmic mechanism design \cite{NR2001} researchers have considered
the problem of machine scheduling in adversarial settings. Many settings have been considered.
For example, in \cite{LaviS07} there are $n$ jobs or tasks that need to be assigned to
$m$ machines, where each job has to be assigned to exactly
one machine. Assigning a job $j$ to a machine $i$ incurs a cost $c_{i,j}$ on machine $i$, and the load of a machine is
the sum of the loads incurred due to the jobs assigned to it. The goal is to schedule the jobs so as to minimize the maximum
load of a machine, which is termed the makespan of the
schedule. Each machine is assumed to be a strategic player who privately knows its own
processing time for each job, and may misrepresent these
values in order to decrease its load. Such problems are addressed via mechanism design;
the social designer, who holds the set of jobs to be
assigned, specifies, in addition to the schedule,
payments to the players in order to incentivize them to reveal
their true processing times. In all that rich literature, unlike in our work, monetary transfers are the central tool for design.

A relatively small part of the literature on scheduling in game theoretic settings deals
with deadlines. Porter \cite{Porter04} studies a problem of online scheduling of jobs on
a single processor. Each job is characterized by a release
time, a deadline, a processing time, and a value for successful
completion by its deadline. Monetary transfers are used in order to lead agents to a relatively efficient
behavior. On the other hand the work by Cres and Moulin \cite{CresMoulin} deals with a scheduling domain, where deadlines (rather than job lengths) are private information, and compares two scheduling policies that do not involve money. As in all other work on scheduling that we are aware of, a player does not have uncertainty about his private parameters.

{\bf Price of anarchy.}
The study of scheduling in the context of games with complete information has been advocated by the
 seminal work on the price of anarchy by Koutsoupias and
Papadimitriou \cite{KP99}. Many variants have been studied.
For example, in the classical problem of unrelated machine
scheduling we have $m$ parallel machines and $n$ independent jobs. Job $i$ induces a positive processing time (or load) $w_{i,j}$
 when processed by machine j. The load of a machine is the
total load of the jobs assigned to it. The quality of
an assignment of jobs to machines is measured by the
makespan (i.e., the maximum) of the machine loads or,
alternatively, the maximum completion time among all
jobs. The approach followed is both algorithmic and game-theoretic. Each job
is owned by a selfish agent. This gives rise to a selfish
scheduling setting where each agent aims to minimize
the completion time of her job with no regard to the
globally optimal schedule.
Such a selfish behaviour can lead to inefficient schedules from which no agent has
an incentive to unilaterally deviate in order to improve
the completion time of her job. One natural objective is to design
coordination mechanisms that guarantee that the assignments reached by the selfish agents are efficient (see e.g. [8],[23]).
The rich literature tackling the above issues, deal with complete information games, and agents' incentives to report private information does
not play a role

{\bf The use of money in mechanism design.}
Most work in algorithmic mechanism design deals with game-theoretic versions of optimization problems.
It has been observed \cite{CKV09} that there are two major classes of problems in algorithmic mechanism
design. The first class contains problems for which there exist optimal truthful mechanisms, but the
problem is computationally intractable. Typical examples include the line of work on combinatorial
auctions (see, e.g.,\cite{DNS06,RNDM,LaviS05,LOS02}), where the objective function is usually the maximization of the
social welfare, that is, the sum of agents' utilities. For this objective function a truthful optimal
mechanism is given by the Vickrey-Clarke-Groves (VCG) mechanism \cite{Vickrey,Clarke,Groves}.
VCG uses payments in order to align the interests of individual agents with the interests of society.
Unfortunately, it turns out that an approximation of the social welfare is insufficient to guarantee
truthfulness using VCG. Therefore, researchers have focused on designing truthful yet efficient
approximation mechanisms; in other words, researchers circumvent the computational hardness by resorting
to approximation, and at the same time enforce tailor-made payments to guarantee truthfulness.
Papers about scheduling on related machines (see, e.g., \cite{AAS07,AT01,DDDR08}) also fall into the first class,
although in the scheduling domain the objective is usually to minimize the makespan.
The second (significantly smaller) class of problems involves optimization problems which are
not necessarily intractable, but for which there is no optimal truthful mechanism. The prominent
problem in this class is scheduling on unrelated machines (see, e.g., \cite{CKV09,LaviS07}). In such domains one
might investigate the optimal approximation ratio achievable by any truthful mechanism, regardless
of computational feasibility.
Until recently, the assumption underlying essentially all previous work on truthful approximation mechanisms
was the existence of money, or, in other words, the ability to make payments.
Recently, Procaccia and Tennenholtz \cite{PT09} advocated approximated mechanism design without money. In that setting one considers
game-theoretic optimization problems where returning the optimal solution is not strategyproof.
The idea is that approximation can be
used to obtain strategyproofness without resorting to payments; In other words, achieving
strategyproofness, without using money, by compromising on  the optimality of the solution.
The approach has been applied in a variety of domains such as facility location \cite{AFPT10} and matching markets \cite{AFKP10}.

{\bf Uncertainty about private inputs:}
Most work on algorithmic mechanism design assumes that agents' valuations are private. However, there is also rich literature with regard to the
case that the agent may learn about his own valuation/cost by observing other agents' valuation/cost. The case of common values is a special case; in this case all agents share the same cost, but each of them may get different signals about that cost. The interested reader may learn about the related distinctions in e.g. \cite{Milgromprimer}. More related to our current work is the work on fault tolerant mechanism design
\cite{PRST08}. There the authors study the problem of
task allocation in which the private information of the agents is not only their costs to
attempt the tasks, but also their probabilities of failure. These probabilities reflect uncertainty on behalf of the agent, and are assumed to be independent of his own cost/valuation, and of other agents. For several different instances
of this setting the authors present technical results, including positive ones in the form of mechanisms
that are incentive compatible, individually rational and efficient, and negative ones
in the form of impossibility theorems. Monetary transfers are yet again the major tools employed in that study.

{\bf Fairness:}
As mentioned, our notion of {\em fair-share} is related to the literature  on fair division \cite{BramsTaylor}. However, it is different from maxmin fairness typically discussed in the CS literature (e.g. \cite{fairness}), and from the idea of proportional utility advocated in the literature (see \cite{BramsTaylor} for this and related concepts and discussions). In particular, our fair-share notion offers better guarantees to some players than to others. As noted earlier, our inspiration for this notion did not come from the fairness literature, but from literature on approximation algorithms~\cite{FIMN}.

{\bf Forgiveness:}
Forgiveness is an important concept in social studies. It has also been recently adopted in designing computational systems, such as reputation systems \cite{VHP}. Another line of research that introduces a notion of forgiveness is the study of evolution of cooperation \cite{Axelrod}; indeed, the famous Tit-for-Tat strategy can be viewed as employing proportional punishment and forgiveness.
While game theory does not deal explicitly with forgiveness, some classical solution concepts do incorporate a possibility of error on behalf of players. A well known example is that of {\em trembling hand} perfect equilibrium (see \cite{Selten}). It defines rational behavior of a player as a limit to which best responses converge once the probability of ``accidental" irrational behavior on behalf of other players tends to 0.

\section{Some additional proofs}

\subsection{Near deterministic jobs}

Here we prove Theorem~\ref{thm:goodneardeterministic}. We note that the proof applies even to more general classes of jobs than near-deterministic. It suffices that players know their job lengths within a power of~2.

\begin{proof} {\bf [Of Theorem~\ref{thm:goodneardeterministic}].} In our proof we prefer simplicity over presenting tight bounds. The players are asked to report their private inputs (which for near-deterministic jobs we may assume without loss of generality to be of the form $t_i$, meaning that the running time is at least $t_i$ and not more than $2t_i$).

The mechanism selects a threshold value $t$ and runs only those jobs for which $t_i \le t$. Each such job is run until completion, or aborted after $t$ steps (whichever comes first). This continues until $D$ time steps elapse. (Technically, players with $t_i > t$ may be offered a lottery of running for $D$ steps with probability $1/2n$, thus making it a dominant strategy for them to truthfully report that $t_i > t$.) We show that there is a choice of $t$ that gives the desired approximation. Order all jobs by their true lengths. The optimal schedule OPT runs a prefix of this order. Let $t/2$ be the average length of a job in this prefix. Try setting $t$ as a threshold. At least half the jobs of OPT are eligible (by Markov's inequality). However, other jobs might be eligible as well (those with $t_i \le t \le 2t_i$). We call these jobs {\em $t$-dangerous}. They form no problem if they indeed finish (each such job comes at the expense of a constant number of jobs from OPT). But the problem is that for some of them their true length is above $t$ and they are aborted. This is wasted time. If the number of dangerous jobs is not more than a multiplicative factor of $2^{\sqrt{\log n}}$ larger than the good jobs, we set the threshold at $t$. Otherwise, we double $t$. The dangerous jobs become good, but some other jobs might become dangerous. We may repeat doubling at most $\sqrt{\log n}$ times before $n$ jobs are reached. By this time, the length grew by $2^{\sqrt{\log n}}$.

Once we set the threshold at $t$, the eligible jobs are run in a random order. In expectation, at least a $2^{-\sqrt{\log n}}$ fraction of them finish, and each of them takes a time that is a factor at most $2^{\sqrt{\log n}}$ larger than the average time taken by a job in the optimal solution. This gives a factor $2^{2\sqrt{\log n}}$ approximation, as desired.

The above discussion proves the existence of a good threshold $t$. It is not hard to change this existential argument into an algorithm for computing an appropriate $t$. We omit the details.
\end{proof}

\subsection{The fair-share scheduler}
\label{sec:fairshare}

We present more details for the proof of Theorem~\ref{thm:strategicfair}. There are several variations to our fair-share scheduler, though all of them use the same basic principle of letting the player choose parameters for a lottery over time units. Here is one variation that is applicable when no job is longer than $M$, and $M$ is substantially smaller than $D$. This mechanism is $\frac{D-M}{D}$-fair, and in addition guarantees that very short jobs do finish.

Let $u$ (for unit) denote the value $u = (D-M)/n$.
Every player $P_i$ computes the value $t_i$ satisfying $u \le t_i \le M$ that maximizes his expected payoff per time unit if run for $t_i$ steps. Namely, $t_i = \max_{u \le t \le M}[f_i(t)/t]$, . The player reports the value $t_i$ to the mechanism.
The mechanism sets up an assignment problem in which jobs assign coupons to events. There are $N$ events and $n$ jobs. A job has $uN/t_i$ coupons of value $t_i/u$ each. ($N$ is assumed to be chosen such that $uN/t_i$ is integer for all $t_i$.) In a feasible assignment, a job assigns its coupon to distinct events. Let $V_j$ be the total value of coupons assigned to event $j$. We say that event $j$ is {\em overbooked} if $V_j > nu$ and moreover, this inequality continues to hold regardless of which single coupon is removed from the event. We say that event $j$ is {\em underbooked} if $V_j < nu$. Averaging shows that in every feasible assignment, if there is an overbooked event there must also be an underbooked event. Moreover, there must be at least one coupon that the overbooked event may transfer to the underbooked event while keeping the assignment feasible. As the underbooked event does not become overbooked by this transfer, repeated applications of such transfers lead to a feasible assignment with no overbooked events. In particular, $V_j \le nu + M = D$ for every $j$. Given such an assignment, the scheduler chooses an event at random and runs those jobs that assigned a coupon to this event. By the bound on $V_j$, this schedule is feasible. Moreover, each job $J_i$ is selected with probability exactly $u/t_i$, and if selected, gets $t_i$ time steps.  In particular, jobs of length $u$ (or less) are selected for sure.

{\bf Remark.} For a related definition of $u$, namely, $u = (D-M)/(n-1)$, if all $t_i$ requested by players are integer multiple of $u$, the assignment problem can be replaced by a simpler mechanism. Sort all players in order of increasing $t_i$ (breaking ties arbitrarily), and gives them consecutive intervals along the real line (starting at 0), where the length of the interval given to a player $P_i$ equals $u/t_i$. Choose random $r \in [0,1)$ uniformly at random. Select every job whose interval contains a point of the form $Z + r$, where $Z$ is integer, and run it for $t_i$ steps (or until it finishes, whatever happens first). This schedule is feasible (takes at most $D$ time steps). This follows from the observation that for every job of length $t$ that is selected, the $(t/u)-1$ players that follow it in the sorted order are necessarily not selected. Hence only the last selected job can cause surplus compared to expectation, and its length is at most $M$. Hence the total running time is at most $(n-1)u + M = D$. We also note (though this is not needed for the proof of Theorem~\ref{thm:strategicfair}) that the total length the scheduler commits to is larger than $D - 2M$. Every job of length $t$ selected is preceded by at most $t-1$ jobs not selected. Hence the loss is only in the last jobs not selected, and there are at most $(M-1)/u$ of them. Hence the time committed is at least $u(n - (M-1)/u)  > D - 2M$.

Long jobs whose optimal choice of $t_i$ is larger than $M$ are not handled by the above mechanism. To fix this, run the above mechanism with $u = D/n$ and $M = D$. The schedules obtained might not be feasible (might have length as large as $2D$), but can always be broken into two feasible schedules (one containing the largest coupon, the other containing the rest). Picking one of them at random
ensures every job at least half of its fair share. This proves Theorem~\ref{thm:strategicfair}.

An adaptive variation of the fair-share scheduler is to order players at random and then proceed in rounds (resembling the proof of item (2) of Theorem~\ref{thm:nocomplete}). The first job gets some probability (say 1/2) of running until completion (this ensures fairness with respect to long jobs), and thereafter, in each round the expectation of the lottery given to a player is the ratio between the remaining time and remaining number of players (which is $D_i/n_i$ in the notation of the proof of Theorem~\ref{thm:nocomplete}). This also leads to an $\Omega(1)$-fair schedule.

\subsection{Oblivious schedulers}

To put Theorem~\ref{thm:strategicfair} in context, it is interesting to investigate to what effect the scheduler is using the reports of the player. A scheduler that ignores the reports of players is called {\em oblivious}. We show that such a scheduler cannot be $O(1)$-fair. We break Theorem~\ref{thm:oblivious} into two statements and prove each of them separately.

\begin{proposition}
\label{thm:nooblivious}
Every oblivious scheduler is $O(1/\log n)$-fair at best, even if preemption is allowed.
\end{proposition}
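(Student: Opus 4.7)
The plan is to exhibit a single adversarial instance on which no oblivious scheduler, even one that is adaptive and allowed preemption, can be better than $O(1/\log n)$-fair. Set $L = \lfloor \log n \rfloor$ and $D = n$, and partition the players into $L$ ``scales'': for each $k \in \{0, 1, \ldots, L-1\}$, put $n/L$ players at scale $k$, each having a deterministic job length of exactly $2^k$. The fair share of a scale-$k$ player is $(f_k(2^k)/2^k) \cdot (D/n) = 1/2^k$, so in order to be $\rho$-fair the scheduler must finish scale-$k$ players with probability at least $\rho/2^k$ for every $k$.

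Because the scheduler is oblivious, it cannot see which player sits at which scale, so by Yao's minimax principle I would fix a deterministic scheduler and analyze it against a uniformly random bijection $\pi$ assigning players to positions. Let $T_i$ be the total (possibly random, via $\pi$) time allocated to position $i$; one always has $\sum_i T_i \le D = n$. Define $M_k = |\{i : T_i \ge 2^k\}|$. A one-line Abel summation gives $\sum_{k=0}^{L-1} M_k \cdot 2^{k-1} \le \sum_i T_i \le n$ deterministically, so $\sum_k \mathbb{E}[M_k] \cdot 2^k \le 2n$.

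For a non-adaptive scheduler the $T_i$ are fixed independently of $\pi$, so by symmetry a scale-$k$ player sits at a position with $T_i \ge 2^k$ with probability exactly $M_k/n$, and its finishing probability equals this. The ratio to fair share is then $M_k \cdot 2^k / n$, and pigeonholing $\sum_k M_k 2^k \le 2n$ across $L$ scales yields some $k$ with $M_k \cdot 2^k \le 2n/L$, hence ratio at most $2/L = O(1/\log n)$, which already settles the proposition in the non-adaptive warm-up.

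The technical obstacle is the adaptive/preemptive case, where $T_i$ depends on $\pi$ through observations and can be correlated with the event that position $i$ is at scale $k$. My plan is to argue that this correlation is controlled: the only information the scheduler can ever acquire about position $i$'s scale is by running $i$ itself, and running $i$ for $t < 2^k$ time steps reveals only whether $i$'s length is $\le t$, which merely rules out scales strictly smaller than $\lceil \log_2 t \rceil$ and leaves all remaining scales exchangeable at $i$. Exposing observations sequentially in a coupling/martingale argument, one obtains $P(T_{\pi(k)} \ge 2^k) \le O(\mathbb{E}[M_k]/n)$ for each $k$, and then the same pigeonhole on $\sum_k \mathbb{E}[M_k] \cdot 2^k \le 2n$ gives the claimed $O(1/\log n)$ bound on fairness. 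The heart of the proof, and the main thing to execute carefully, is this exchangeability claim across the whole adaptive execution with preemption.
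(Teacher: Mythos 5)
Your non-adaptive warm-up is correct, but the adaptive/preemptive case is exactly where your plan breaks, and the key inequality you intend to prove, $\Pr\bigl(T_{\pi(k)} \ge 2^k\bigr) \le O(\mathbb{E}[M_k]/n)$, is in fact false for adaptive schedulers on your instance. The exchangeability you invoke holds only among the scales not yet ruled out at position $i$, and ruling scales out is precisely how an adaptive scheduler correlates deep time allocations with high scales: conditioned on a job not finishing within $2^{k-1}$ steps, its scale is uniform over $\{k,\ldots,L-1\}$, so the chance it is exactly $k$ is about $1/(L-k)$, not $1/L$. Concretely, take $k=L-2$ (length $n/4$) and the adaptive scheduler that processes positions in its own fixed order, running each for up to $n/8$ steps and, whenever a job survives to $n/8$, pushing it on to $n/4$ before moving on. The expected cost per processed position is about $\frac{1}{L}\sum_{j\le L-3}2^j+\frac{2}{L}\cdot\frac{n}{4}=\Theta(n/L)$, so about $\Theta(L)$ positions get processed; a fixed scale-$k$ player is processed, and hence finishes, with probability $\Theta(L/n)$, while $\mathbb{E}[M_k]$ (positions receiving $\ge n/4$) is only $\Theta(L)\cdot\frac{2}{L}=O(1)$. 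This violates your lemma by exactly the $\Theta(\log n)$ factor at stake, so the coupling/martingale step you defer cannot be carried out as stated. (The proposition itself plausibly still holds for your instance, but a repair must charge the identification cost -- the time spent reaching depth $2^{k-1}$ -- and use the $1/(L-k+1)$ conditional bound, which is a noticeably more delicate accounting than the one you set up.)

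The paper avoids this entire difficulty by choosing the hard instance differently: with $D=n$, each job independently has length $2^i$ with probability $2^i/2n$, i.e., probability proportional to length. Under this distribution, giving any job -- fresh or partially run -- $t$ further steps completes it with probability $O(t/n)$ regardless of the history (continuing a surviving job is at best a factor $2$ better than starting a fresh one), so the expected total number of completions of any scheduler, adaptive and preemptive, is $O(1)$ by a single per-time-step charging argument; no per-scale bookkeeping or exchangeability claim is needed. Meanwhile the expected sum of fair shares is $n\sum_i (2^i/2n)2^{-i}=\Theta(\log n)$, and comparing the two immediately forces $\rho=O(1/\log n)$. Your instance, with equal numbers of players per scale, deliberately lacks this property (a fresh job finishes in one step with probability $1/L\gg 1/n$), which is why you are pushed into the per-scale argument whose central lemma fails.
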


\begin{proof}
Let $D = n$. There are $n$ jobs. For each job independently, its length is $2^i$ with probability $2^i/2n$, where $1 \le i \le \log n$. The expected sum of fair shares is $n\sum_{i=1}^{\log n} (2^i/2n)(1/2^i) = (\log n)/2$.

Consider an arbitrary oblivious schedule. W.l.o.g., it only schedules jobs for time periods that are powers of~2. Moreover, for every value of $t \le n$, it is more beneficial to attempt to schedule a fresh job rather than a previously scheduled job that did not finish (this is true up to a factor of~2 in the probability of terminating). Hence for simplicity we assume that the schedule only schedules fresh jobs. For such a job, the schedule need only announce a value $t$ for the number of time steps to run the job. The probability that the job indeed finishes in $t$ steps is at most $2t/2n = t/n$. The expected number of steps that are actually run is at least $t/2$ (and moreover, with probability at least $3/4$ the number of steps is at least $t/2$). Informally, this means that per time step there is probability at most $2/n$ of finishing a job, and hence the expected number of jobs that finish in $n$ time steps is at most 2. We omit further details of the proof.
\end{proof}

The bounds in Proposition~\ref{thm:nooblivious} are essentially tight.

\begin{proposition}
\label{thm:yesoblivious}
There is an oblivious scheduler (with no preemption) that is $\Omega(1/\log n)$-fair.
\end{proposition}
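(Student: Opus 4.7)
The plan is to construct a randomized oblivious scheduler that ``guesses'' the right time-budget per job by randomizing over a logarithmic number of candidate budgets and, conditional on guessing well, delivers a constant fraction of the player's fair share. Specifically, I would use the $O(\log n)$ candidate values $t_j = 2^j (D/n)$ for $j = 0, 1, \ldots, \log n$ (assume for notational simplicity that $n$ is a power of $2$), which cover the range $[D/n, D]$ up to a factor of $2$.

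The scheduler: pick $j \in \{0, 1, \ldots, \log n\}$ uniformly at random; then choose a uniformly random subset of exactly $n/2^j$ players and run each chosen job for $t_j$ time steps (aborting if it does not finish). The total time committed is $(n/2^j) \cdot 2^j (D/n) = D$, so the schedule is feasible and requires no preemption. It is oblivious by construction, since the reports of players play no role.

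For the analysis, fix a player $P_i$ with private input $f_i$ and fair share $s_i = \max_{D/n \le t \le D} f_i(t) \cdot D / (nt)$, attained at some $t_i^*$. Choose $j^*$ to be the smallest index with $t_{j^*} \ge t_i^*$; then $t_i^* \le t_{j^*} \le 2 t_i^*$, hence $2^{j^*} \le 2 t_i^* n/D$. Conditional on the scheduler drawing $j = j^*$, player $P_i$ is included in the random subset with probability exactly $(n/2^{j^*})/n = 1/2^{j^*}$, and if included, finishes with probability at least $f_i(t_{j^*}) \ge f_i(t_i^*)$ because $f_i$ is non-decreasing. Thus the conditional finishing probability is
\[
\frac{f_i(t_i^*)}{2^{j^*}} \;\ge\; \frac{f_i(t_i^*) D}{2 n t_i^*} \;=\; \frac{s_i}{2}.
\]
Since $j = j^*$ is drawn with probability $1/(\log n + 1)$, the unconditional finishing probability for $P_i$ is at least $s_i / (2(\log n + 1)) = \Omega(s_i / \log n)$, as required.

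There is essentially no hard step; the only care-point is the divisibility issue at the top of the range, where $n/2^j$ may fail to be an integer (if $n$ is not a power of $2$) or where $j = \log n$ corresponds to running a single job for the full deadline. Both of these can be absorbed by rounding $n/2^j$ down to the nearest integer and restricting $j$ to the range where this is at least $1$, at the cost of only constant factors inside the $\Omega(\cdot)$.
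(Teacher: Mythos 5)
Your proposal is correct and follows essentially the same route as the paper's proof: pick a random power-of-two time budget $T = 2^j D/n$ (a logarithmic number of scales) and then randomly select which jobs get to run for $T$ steps, losing a factor $2$ from rounding the fair-share maximizer up to a power of two and a factor $O(\log n)$ from the random scale choice. The only cosmetic difference is that the paper selects the jobs to run via a random ordering truncated at the deadline rather than a uniformly random subset of size $n/2^j$, which also sidesteps the divisibility issue you mention.
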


\begin{proof}
The scheduler selects a random threshold, much in the same way that has been done numerous times in other contexts. Pick at random an integer value $1 \le i \le \log n$ and set a threshold $T = 2^i D/n$. Sort the jobs in random order and let every job run for up to $T$ steps.

To see that this mechanism is $\Omega(1/\log n)$-fair, consider an arbitrary job $J_i$ and let $t_i$ be the value maximizing $f_i(t)/t$ (from which his fair share $(f_i(t_i)/t_i) \cdot (D/n)$ is derived). With probability $1/\log n$ we have that $T/2 < t_i \le T$, and then with probability at least $\min[1,D/nT]$ job $J_i$ gets to run for at least $t_i$ steps, and finishes with probability $f_i(t)$.
\end{proof}

Oblivious schedulers have one significant advantage over strategic schedulers - they cannot be fooled by malicious (or erroneous) reporting by the players. Hence the performance guarantees offered by oblivious schedules can form a sort of insurance against erratic reporting by players. Formally, to get this insurance, one may run an oblivious schedule (such as that of Proposition~\ref{thm:yesoblivious}) with probability 1/2, and a strategic scheduler (such as that of Theorem~\ref{thm:strategicfair}) with the remaining probability.

\subsection{Complete schedulers}

If preemption is allowed, the oblivious scheduler of Proposition~\ref{thm:yesoblivious} can easily be made complete (by repeating it in rounds until either all jobs finish or time runs out). However, it will not become $\Omega(1)$-fair. We prove Proposition~\ref{thm:complete} that provides an $\Omega(1)$-fair complete scheduler. The price to pay to achieve this combination is two-fold: preemption is used, and the solution concept is not that of dominant strategies.

\begin{proof} {\bf [Of Proposition~\ref{thm:complete}].}
We sketch the proof. The basic idea is to have two phases. In the first phase, run the fair share scheduler. This ensures $\Omega(1)$-fairness. If by the end of the first phase, there still is unused time and there still are jobs that have not finished (either because they lost their lottery or because they were aborted), do a second phase in which the remaining jobs are scheduled (in some arbitrary order) until they either all finish or $D$ time steps elapse. The problem with this idea is that because of the second phase the fair share scheduler of the first phase is no longer truthful. The number of steps that the player anticipates to get in the second phase distorts his preferred report for the first phase. For this reason, in this theorem we abandon dominant strategies and settle for Nash equilibrium. We now present the scheduler in more details.

With probability half, choose a random job and run it to completion. (This handles the fair share of long jobs.) Thereafter, perform two phases. Order the jobs at random. In the second phase, run jobs (that have not finished in the first round) to completion (or until time runs out). In the first phase, do the round version of the fair-share mechanism (see Section~\ref{sec:fairshare}) but with a twist. As in the fair-share mechanism, for each job $J_i$ allow a lottery with expectation equal to the ratio between the number of time steps remaining and $n-i$. However, now the optimal lottery for a player depends not only on its $f_i$ -- a slightly better lottery may be available if the player maximizes his overall success probability given the jobs that remain and accounting for the effect of this on his probability of getting extra time also in the second phase. Such a computation can reliably be made only if one assumes that reports of all players are truthful. If the mechanism offers to do these computations on behalf of the players, then reporting their true private inputs becomes a Nash equilibrium. In this mechanism $\Omega(1)$-fairness is inherited from the fair-share mechanism (having the second phase only improves prospects for players), and completeness is obtained by having the second phase.
\end{proof}

We sketch here an algorithmic implementation of the scheduler of Proposition~\ref{thm:complete}. The optimal choice at each step can be computed by backward induction, as follows. Recall that there are $D$ time steps and $n$ players. Given an ordering of players, fill in reverse a dynamic programming table of size $n \times D \times D \times D$. Entry $(i,j,k,L)$ says what player $P_i$ wants to play (which lottery with integer expectation) at time step $j$ if already $k$ steps are committed for the second phase (by players of index smaller than $i$ who did not finish in the first phase), and given this chosen lottery what is the probability that $L$ steps will remain after the first phase. For each value of $i$,$j$,$k$ there are $D$ possible lotteries, the best of which can be computed by backward induction. Given this best lottery, it has two possible outcomes (winning the lottery or not), and for every $L$ its probability can be computed as a weighted sum for these two outcomes (using the information previously filled in the dynamic programming table). Further details are omitted. The running time is polynomial in $D$ and $L$.

\subsection{Qualitative private inputs}

\begin{proof} {\bf [Of Theorem~\ref{thm:preemption}.]}
Each player reports a length $r_i$. For every job $J_i$, the scheduler keeps a counter $c_i$ of how many steps the job has already run, and a {\em virtual length} $v_i$ for the job. Initially, $c_i = 0$ and $v_i = r_i$. The scheduler sorts players in order of increasing order of $v_i$, always breaking ties towards the lower indexed player (or some other fixed arbitrary order that is independent of the reports $r_i$). At every time step the scheduler picks the job currently first in this order, increases its $c_i$ by~1, and runs it for one step. If the job finishes, it is removed. So far, this is similar to the shortest first scheduler (just with more notation). The new aspect happens when $c_i$ reaches $v_i$. In the shortest job scheduler, a job is aborted at this time. However, in our forgiving scheduler, at this point the job is not aborted, and instead a value of~2 is added to the current value of $v_i$. This may cause $J_i$ to move further away in the sorted order of jobs (as the order is based on the $v_i$ value). Similarly, if $J_i$ gets to perform another step, $c_i$ is increased by one, and once again $v_i$ is increased by~2, and so on. Hence the scheduler continues in this way, each time picking a job from the top of the sorted order, increasing its $c_i$ by~1, letting it run for one step, and if it does not finish and $c_i \ge r_i$ then also increase its $v_i$ by~2 and reinsert it in the sorted order. This is continued until either all jobs finish, or $D$ time steps elapse.

Let $t_i$ be the true length of a job. It will finish if and when $c_i = t_i$. Consider the value of $v_i$ at that time.  If $t_i \le r_i$ then $v_i = r_i$, Note that if $r_i > t_i$ the value of $v_i$ is exactly the true $t_i$ plus the error $r_i - t_i$ in the report. On the other hand, if $t_i > r_i$, then when $c_i$ reaches $t_i$ the value of $v_i$ is $r_i + 2(t_i - r_i) = t_i + (t_i - r_i)$. Hence it is larger than the true $t_i$ exactly by the error of $t_i - r_i$. Hence in both cases, $v_i = t_i + |r_i - t_i|$. Given the reports of all other players and the true length of their jobs, this value of $v_i$ determines where job $J_i$ will be in the sorted order at the time that it finishes, and also how many steps will every other job complete before job $J_i$ finishes. This last term in monotonically non-decreasing in the error $|r_i - t_i|$ (this is the error monotonicity property), and independent of the sign of the error (this is error symmetry). If all players happen to report true lengths for their jobs, the mechanism achieves maximum welfare (is identical to shortest first mechanism).
\end{proof}

We note that the kind of error monotonicity and error symmetry offered to the players in the proofs of Theorems~\ref{thm:preemption} and~\ref{thm:qualitative} are different. In Theorem~\ref{thm:qualitative} the guarantee is in terms of probability of success. In Theorem~\ref{thm:preemption} it is in terms of in which location $P_i$ will be in the sorted order of jobs.

We also remark that there are situations in which error monotonicity and error symmetry would not suffice in order to entice an uncertain player to report his private input. A trivial example is when $\ell_i > D$, in which case it would not make sense for the player to report anything larger than $D$. But there are other examples that are not as trivial. Hence there is plenty of room for further research (by game theorists, psychologists, sociologists) into what constitutes a mechanism that will work well in practice when players have qualitative private inputs.

\end{appendix}

\end{document}